\def\<{\langle}
\def\>{\rangle}
\newcommand{\be}{\begin{eqnarray} \begin{aligned}}
\newcommand{\ee}{\end{aligned} \end{eqnarray} }
\newcommand{\benn}{\begin{eqnarray*} \begin{aligned}}
\newcommand{\eenn}{\end{aligned} \end{eqnarray*} }
\newcommand{\ben}{\begin{eqnarray} \begin{aligned}}
\newcommand{\een}{\end{aligned} \end{eqnarray} }
\newcommand{\bc}{\begin{center}}
\newcommand{\ec}{\end{center}}
\newcommand{\tr}{\mathop{\mathsf{tr}}\nolimits}
\newcommand{\beq}{\begin{eqnarray} \begin{aligned}}
\newcommand{\eeq}{\end{aligned} \end{eqnarray} }
\newcommand{\bea}{\begin{array}}
\newcommand{\eea}{\end{array}}
\newcommand{\bee}{\begin{enumerate}}
\newcommand{\eee}{\end{enumerate}}
\newcommand{\bei}{\begin{itemize}}
\newcommand{\eei}{\end{itemize}}
\newtheorem{theorem}{Theorem}
\newtheorem{lemma}[theorem]{Lemma}
\def\01{\{0,1\}}
\newcommand{\ket}[1]{|#1\rangle}
\newcommand{\bra}[1]{\langle#1|}
\newcommand{\ketbra}[2]{|#1\rangle\langle#2|}
\newcommand{\braket}[2]{\langle #1|#2\rangle}
\def\<{\langle}
\def\>{\rangle}
\newtheorem*{rep@theorem}{\rep@title}
\newcommand{\newreptheorem}[2]{%
\newenvironment{rep#1}[1]{%
 \def\rep@title{#2 \ref{##1} (restatement)}%
 \begin{rep@theorem}}%
 {\end{rep@theorem}}}
\begin{document}

\title{What is the probability of a thermodynamical transition?}

\author{\'{A}lvaro M. Alhambra}
\email{alvaro.alhambra.14@ucl.ac.uk}
\affiliation{Department of Physics and Astronomy, University College London, Gower Street, London WC1E 6BT, United Kingdom}
\author{Jonathan Oppenheim}
\email{j.oppenheim@ucl.ac.uk}
\affiliation{Department of Physics and Astronomy, University College London, Gower Street, London WC1E 6BT, United Kingdom}
\affiliation{Department of Computer Science and Centre for Quantum Technologies, National University of Singapore, Singapore 119615}
\author{Christopher Perry}
\email{christopher.perry.12@ucl.ac.uk}
\affiliation{Department of Physics and Astronomy, University College London, Gower Street, London WC1E 6BT, United Kingdom}

\begin{abstract}
If the second law of thermodynamics forbids a transition from one state to another, then it is still possible to make the transition happen by using a sufficient amount of work. But if we do not have access to this amount
of work, can the transition happen probabilistically?
In the thermodynamic limit, this probability tends to zero, but here we find that for finite-sized systems, it can be finite. We compute the maximum probability
of a transition or a thermodynamical fluctuation from any initial state to any final state,  and show that this maximum can be achieved
for any final state which is block-diagonal in the energy eigenbasis. 
We also find upper and lower bounds on this transition probability, in terms of the work of transition. As a bi-product,
we introduce a finite set of thermodynamical monotones related to the thermo-majorization criteria which governs state transitions, and compute the work of transition in terms of them. The trade-off between the probability of a transition, and any partial work added to aid in that transition is also considered.
Our results 
have applications in entanglement theory, and we find the amount of entanglement required (or gained) when transforming one pure entangled state into any other.
\end{abstract}
\maketitle

\section{Introduction}

Given a quantum system in a state $\rho$ with some Hamiltonian, $H_1$, when can it be deterministically transformed into another state $\sigma$ associated with a potentially different Hamiltonian, $H_2$? 
If we can put the system into contact with a heat bath at temperature $T$, then in the thermodynamical limit, and if interactions are short-ranged or screened, a transition will occur as long as the free energy of the initial configuration is larger than the free energy of the final
configuration. The free energy of the state $\rho$ defined as:
\begin{align}
F(\rho,H_1)=\tr \left[H_1\rho\right] - TS\left(\rho\right),
\end{align} 
were $S(\rho)$ is the entropy; $S(\rho)=-\tr\rho\log\rho$. This is a formulation of the second law of thermodynamics, if we factor in energy conservation (the first law). If we wish to make a forbidden transition occur, then we need
to inject an amount of work which is greater than the free energy difference between initial and final states.

However, what if we are 
interested in small, finite-sized systems? Or in systems with long-range interactions? The thermodynamics of systems in the micro-regime, where we do not take the thermodynamical limit, has gained increased importance as we cool 
and manipulate smaller and smaller systems at the nano scale and beyond~\cite{Scovil1959masers,scully2002afterburner,rousselet1994directional,Faucheux1995ratchet,baugh2005experimental}. Theoretical work has continued a pace, with increased interest in the field in recent years~\cite{ruch1975diagram,ruch1976principle,ruch1978mixing,Geusic1967quatum,Alicki79,howard1997molecular,geva1992classical,Hanggi2009brownian,allahverdyan2000extraction,uniqueinfo,feldmann2006lubrication,linden2010small,dahlsten2011inadequacy,del2011thermodynamic, HO-limitations, aaberg-singleshot,  faist2015minimal, skrzypczyk2014work,egloff2015measure, brandao2013second, cwiklinski2015limitations, lostaglio2015description, lostaglio2014quantum, halpern2014beyond, wilming2014weak, lostaglio2015stochastic, narasimhachar2015low}.
If we do not take the thermodynamical limit, then
provided $\sigma$ is block-diagonal in the energy eigenbasis, there is not just one criteria (the decreasing of the free energy), but a family of criteria which
determine whether a state transition is possible. A set of such criteria
which have been proven to be necessary and sufficient condition for quantum thermodynamical state transformations~\cite{HO-limitations} (c.f. \cite{ruch1978mixing}), are the 
so-called thermo-majorization criteria~\cite{ruch1976principle,HO-limitations}.
Thermo-majorization is a set of conditions that are
more stringent than the ordinary second laws and had been conjectured to provide a limitation on the possibility of thermodynamical transformations since 1975~\cite{ruch1976principle}. It is related
\cite{ruch1978mixing,renes2014work} to a condition known as Gibbs-stochasity\cite{Streater_dynamics,janzing2000thermodynamic} a condition which can be extended to include fluctuations of work \cite{alhambra2016second}. 

Once again though, if the diagonal state $\sigma$, is not thermo-majorized by $\rho$, then a transition is still possible, provided sufficient work is used.
One can compute the work required (or gained) from this transition using thermo-majorization diagrams \cite{HO-limitations}, via a linear program\cite{renes2014work}, or the relative-mixedness \cite{egloff2015measure}.
Suppose however, we want to make a transition from $\rho$ to $\sigma$, and it requires work which we cannot, or do not wish to, expend. 
Can we still nonetheless make the transition with some probability $p$ rather than with certainty? And if so, what is the highest probability, $p^*$, that can be achieved? In particular, 
given $\rho$ and $\sigma$, we are interested in maximizing $p$ in the following process:
\begin{equation}
\rho{\longrightarrow}\rho'= p\sigma+\left(1-p\right)X, \label{Goal}
\end{equation}
with $X$ being some arbitrary state. 

Such a transformation can be regarded as a fluctuation of a system's state, in the sense that the transformation is only probabilistic. Within the study of thermodynamics for small systems, great progress has already been made in analyzing how the work distribution associated with a given transformation of process can fluctuate \cite{jarzynski1997nonequilibrium,crooks1999entropy,tasaki2000jarzynski,talkner2009fluctuation} (see \cite{seifert2012stochastic,esposito2009nonequilibrium,campisi2011colloquium} for reviews on both the classical and quantum cases). Fluctuation relations such as the Jarzynski equality \cite{jarzynski1997nonequilibrium} and Crooks' theorem \cite{crooks1999entropy}, developed under the paradigm of stochastic thermodynamics, have been used to calculate the work fluctuations of non-equilibrium processes. Investigating fluctuation in a system's state provides a natural, complementary strand of research which we are able to formulate and analyze in this paper by applying techniques from quantum information theory developed in \cite{HO-limitations}. In related work \cite{alhambra2016second}, we shall address the problem of fluctuating work within this information theoretic framework. This shall serve to unify the two approaches to thermodynamics for small systems and extend and provide insight into previous work based on the stochastic thermodynamics perspective.

Here, we will upper bound the maximum probability of a fluctuation between any given $\rho$ and $\sigma$. When $\sigma$ is block-diagonal in the energy eigenbasis, we will show that this bound can be achieved and furthermore, that there exists a two outcome measurement that can be performed on 
$\rho'$ such that we obtain $\sigma$ with the maximum probability $p^*$. Of course, measurements do not come for free in thermodynamics - it costs work to erase the record of the measurement outcome \cite{bennett1982thermodynamics}. That this measurement can be performed
is noted for completeness -- however, we take Eq. \eqref{Goal} as our primary goal, defining what we mean by a thermodynamical transition. We will discuss measurements in Section \ref{sec:measurement} as they only provide a small correction
of $kT\log2$ to the work cost of a probabilistic transformation.

Our main result will be Theorem \ref{pthermal}, which upper bounds the probability $p^*$ in terms of a minimization over a finite set of ratios between thermodynamical monotones, which are quantities that can only decrease under the set of allowed operations. When the final state is block-diagonal, this bound is achievable, but this may not be the case if the final state has coherences in energy. These monotones, which we will show are given by Eq.
\eqref{eq:TOmonotones}, can be thought of as analogous to free energies.
This is proven in  
Theorem \ref{Finite set theorem} and is equivalent to the thermo-majorization criteria  of \cite{HO-limitations,ruch1976principle}. The set of ratios that we use to bound $p^*$ thus gives an alternative way of verifying if the thermo-majorization criteria are satisfied. Rather than considering the thermo-majorization curves \cite{HO-limitations} or considering a continuous set of monotones~\cite{egloff2015measure} we provide a finite set of conditions to check. Indeed this set provides a strengthening of results from the theory of relative majorization \cite[14.B.4(c)]{marshall2010inequalities} by reducing the number of constraints that need to be considered. 

Before proving Theorem \ref{pthermal}, we will consider in Section \ref{sec:NO} the simpler case where the Hamiltonian of the system is trivial, i.e. $H\propto\mathbb{I}$.  Solving the problem in this regime, referred to as Noisy Operations \cite{uniqueinfo,gour2015resource}, will provide us with insight into the solution for non-trivial Hamiltonians. In this simplified situation, $p^*$ is given by Theorem \ref{NO Theorem}.
The result is similar in form to \cite{vidal1999entanglement} which considers the analogous problem of probabilistic pure state entanglement manipulation using Local Operations and Classical Communication (LOCC). 
However, care must be taken -- the class of operations allowed under LOCC is very different to what is allowed in thermodynamics. For example, under LOCC one can bring in pure states for free (which can be a source of work in thermodynamics) and one is allowed to make measurements for free (which costs work). Perhaps more importantly, many of the LOCC monotones are concave, which is not the case in Noisy Operations, thus we will require some different techniques. It should also be noted that in entanglement manipulation, the maximum probability achievable will be zero if the target state has a larger Schmidt rank than the starting state. Under Noisy Operations, we will see that $p^*$ is always non-zero (though it can be arbitrarily small).

In Section \ref{sec:TO} we consider the general case of arbitrary initial and final Hamiltonians and states. We will prove our results using the paradigm of Thermal Operations (TO) \cite{Streater_dynamics,janzing2000thermodynamic,HO-limitations}.
There are a number of different paradigms one can use to study thermodynamics (e.g. allowing interaction Hamiltonians or changing energy levels), however, these other paradigms are equivalent to 
Thermal Operations \cite{brandao2013resource,HO-limitations}, and thus Thermal Operations are the appropriate paradigm for studying fundamental limitations.
We introduce Thermal Operations at the beginning of Section \ref{sec:TO}. In the case of a trivial Hamiltonian, Thermal Operations reduce to Noisy Operations, the regime considered in Section
\ref{sec:NO}.

Our expression for the cost of a transition between any two states using only a finite number of monotones is given in Lemma \ref{NO Work Exp} for Noisy Operations and Lemma \ref{TO Work Exp} for Thermal Operations. The Noisy Operations result can be adapted to give an expression for the amount of entanglement required (or gained) when transforming any pure bipartite state into another under LOCC. This is given in Appendix \ref{ap:LOCC} and generalizes existing expressions for the distillable entanglement \cite{buscemi2010distilling,buscemi2010general} and cost of entanglement formation \cite{buscemi2011entanglement}.
We also show how $p^*$ can be upper and lower bounded using the work of transitions from $\rho$ to $\sigma$ and $\sigma$ to $\rho$. This is done in Lemma \ref{bounds} for the case of a trivial Hamiltonian, and in Lemma \ref{thermobounds} for the general case.

Finally, we conclude in Section \ref{sec:conclusion} with a discussion on other goals, related to Eq. \eqref{Goal}, which one could attempt when making a probabilistic transition. 
One such goal, the optimization of the \emph{heralded probability}, is discussed in detail in Appendix \ref{ap:herald} where we obtain bounds on it, even in the presence of coherence or catalysts. 
The heralding probability can be thought of as a generalization of the case where one achieves Eq. \eqref{Goal} with a measurement i.e.
\begin{equation}
\rho\otimes\ketbra{0}{0}\stackrel{TO}{\longrightarrow}\hat{\rho}=p\sigma\otimes\ketbra{0}{0}+\left(1-p\right)X\otimes\ketbra{1}{1}.
\nonumber
\end{equation}
and the transition is conclusive.
This allows us to analyze state fluctuations in the presence of measurements, coherence and catalysis. 
We also pose some open questions. One of these regards how $p^*$ varies if we supply additional work to drive the transition from $\rho$ to $\sigma$ or demand that additional work be extracted. The solution for qubit systems with trivial Hamiltonian is given in Appendix \ref{ap:tradeoff}.

\section{Probability of transition under Noisy Operations}
\label{sec:NO}
Before investigating Eq. (\ref{Goal}) in the context of Thermal Operations, we will first consider a simpler, special case - Noisy Operations. In this particular instance of thermodynamics, the Hamiltonian of the system under consideration is trivial.  Noisy Operations were first defined in \cite{uniqueinfo} where the problem of whether a transition between two given states under a particular set of operations was considered. Within Noisy Operations, the following actions are allowed: \emph{i)} a system of any dimension in the maximally mixed state can be added, \emph{ii)} any subsystem can be discarded through tracing out and \emph{iii)} any unitary can be applied to the global system. Throughout this paper, we shall use $\eta_i$ to denote the eigenvalues of $\rho$ and $\zeta_i$ to denote those of $\sigma$. For a comprehensive review of Noisy Operations, see \cite{gour2015resource}.

Given two states, $\rho$ and $\sigma$, it was shown in \cite{uniqueinfo} that transition from $\rho$ to $\sigma$ is possible under Noisy Operations if and only if $\rho$ majorizes $\sigma$ (written $\rho\succ\sigma$). That is, if we list the eigenvalues of $\rho$ and $\sigma$ \footnote{Note that we can always assume that $\rho$ and $\sigma$ have the same number of eigenvalues. If they do not, by applying operation $\emph{i)}$ of Noisy Operations appropriately, we can ensure the systems under consideration have the same dimension.} in decreasing order and denote these ordered lists by $\vec{\eta}=\{\eta_1,\hdots,\eta_n\}$ and $\vec{\zeta}=\{\zeta_1,\hdots,\zeta_n\}$ respectively, the transition is possible if and only if:
\begin{equation}\label{majorization}
V_l(\rho) \geq V_l(\sigma), \quad \forall l \in \{1,\hdots,n\},
\end{equation}
where:
\begin{equation} \label{monotones}
V_l\left(\rho\right)=\sum_{i=1}^{l} \eta_i.
\end{equation}
Lorenz curves are a useful tool for visualizing these criteria (Figure \ref{Lorenz}). For a given state $\rho$, its Lorenz curve is formed by plotting the points:
\begin{equation}
\left\{\left(\frac{k}{n},\sum_{i=1}^{k}\eta_i\right)\right\}_{k=1}^{n},
\end{equation}
and connecting them piecewise linearly (together with the point $(0,0)$) to form a concave curve. If $\rho$ majorizes $\sigma$, the Lorenz curve for $\rho$ is never below that of $\sigma$.

\begin{figure}
\centering
\includegraphics[width=1\columnwidth]{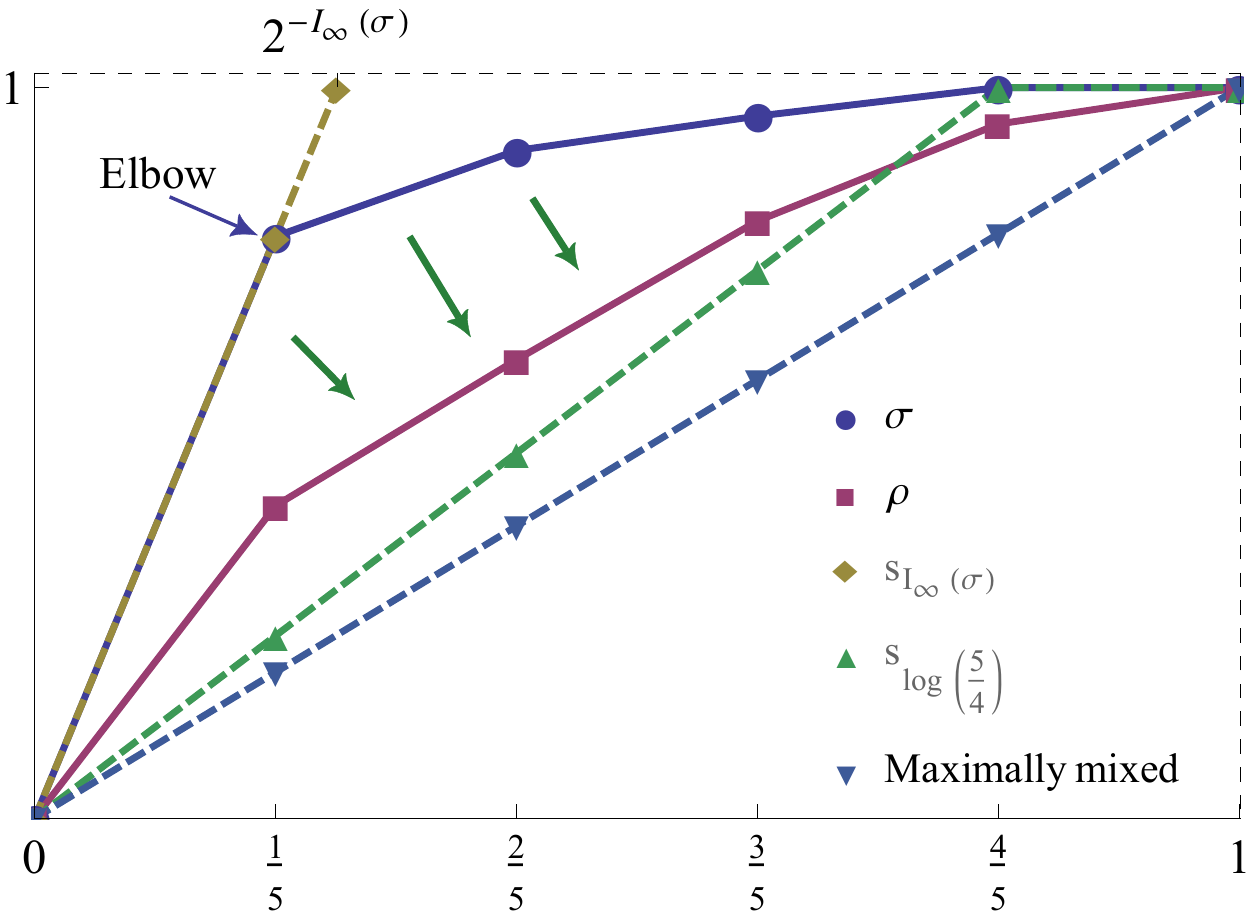}
\caption{ Lorenz Curves. \emph{a)} The Lorenz curve for $\rho$ is defined by plotting the points: $\left\{\left(\frac{k}{n},\sum_{i=1}^{k}\eta_i\right)\right\}_{k=1}^{n}$. \emph{b)} The transition from $\sigma$ to $\rho$ is possible under NO as the curve for $\sigma$ is never below that of $\rho$. \emph{c)} The Lorenz curve for a maximally mixed state is given by the dashed line from $(0,0)$ to $(1,1)$. All other states majorize it. \emph{d)} $s_{\log\frac{5}{4}}$ is an example of a sharp state. \emph{e)} $s_{I_{\infty}\left(\sigma\right)}$ is the least sharp state that majorizes $\sigma$.} \label{Lorenz}
\end{figure}

The functions defined in Eq. (\ref{monotones}), and their analogue in Thermal Operations, will be crucial for the rest of the paper. They are monotones of the theory, only decreasing under Noisy Operations. Excellent reviews regarding the theory of majorization and Lorenz curves can be found in \cite{marshall2010inequalities, gour2015resource}.

\subsection{Non-deterministic transitions}

We will now consider transitions when the conditions given in Eq. \eqref{majorization} are not necessarily fulfilled. Here, rather than transforming $\rho$ to $\sigma$ with certainty, we shall do so with some probability as formulated in Eq. \eqref{Goal}. In particular, we are interested in the maximum probability, $p^*$, that can be achieved. A similar problem is considered in \cite{vidal1999entanglement} for entanglement manipulation and adapting its techniques the following theorem can be shown:
\begin{theorem} \label{NO Theorem}
Suppose we wish to transform the state $\rho$ to the state $\sigma$ under Noisy Operations. The maximum value of $p$ that can be achieved in the transition:
\begin{equation} \label{NO Goal}
\rho\stackrel{\textit{NO}}{\longrightarrow}\rho'= p\sigma+\left(1-p\right)X,
\end{equation}
is given by:
\begin{equation} \label{NO p*}
p^{*}=\min_{l\in\{1,\hdots,n\}} \frac{V_l\left(\rho\right)}{V_l\left(\sigma\right)}.
\end{equation}
\end{theorem}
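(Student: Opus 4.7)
The plan is to prove matching upper and lower bounds on $p^*$, adapting to Noisy Operations the techniques Vidal used for LOCC pure-state transformations \cite{vidal1999entanglement}.

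\textbf{Upper bound.} Each $V_l$ is a monotone under NO (by the majorization characterization in Eq.~\eqref{majorization}) and admits the variational representation $V_l(\tau)=\max_{\Pi}\operatorname{tr}(\Pi\tau)$ over rank-$l$ projectors $\Pi$. Evaluating this maximum on $p\sigma+(1-p)X$ at the projector $\Pi^*$ onto the top-$l$ eigenspace of $\sigma$ yields
\begin{equation*}
V_l\bigl(p\sigma+(1-p)X\bigr)\geq p\,V_l(\sigma)+(1-p)\operatorname{tr}(\Pi^*X)\geq p\,V_l(\sigma),
\end{equation*}
since $\operatorname{tr}(\Pi^*X)\geq 0$. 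Combining with monotonicity $V_l(\rho)\geq V_l(\rho')$ yields $p\leq V_l(\rho)/V_l(\sigma)$ for every $l$, hence $p^*\leq\min_l V_l(\rho)/V_l(\sigma)$.

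\textbf{Achievability.} Let $l^*$ achieve the minimum, so that $p^*=V_{l^*}(\rho)/V_{l^*}(\sigma)$. I construct $\rho'$ diagonal in the eigenbasis of $\sigma$, whose entries $m_i$ on the $i$-th eigenvector of $\sigma$ must satisfy (a) $m_i\geq p^*\zeta_i$, so that $X=(\rho'-p^*\sigma)/(1-p^*)$ is a valid state giving $\rho'=p^*\sigma+(1-p^*)X$, and (b) once sorted decreasingly, the spectrum of $\rho'$ is majorized by $\vec\eta$, so that $\rho\to\rho'$ is possible under NO. A natural candidate assigns $m_i=p^*\zeta_i$ for $i\leq l^*$ and $m_i=\eta_i$ for $i>l^*$. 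The minimality of the ratio at $l^*-1$ yields $\eta_{l^*}\leq p^*\zeta_{l^*}$, so this $\vec m$ is non-increasing, and its partial sums equal those of $\vec\eta$ for $l\geq l^*$ while sitting at or below them for $l<l^*$; this secures (b).

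\textbf{The main obstacle} is establishing (a): when $\sigma$'s spectrum has a heavier tail than $\rho$'s, one can have $\eta_i<p^*\zeta_i$ for some $i>l^*$, which breaks the simple candidate. I would remedy this by redistributing mass within $\vec m$: the identity $\sum_i(\eta_i-p^*\zeta_i)=1-p^*\geq 0$ guarantees that the total surplus at positions with $\eta_i>p^*\zeta_i$ suffices to cover the deficit at positions with $\eta_i<p^*\zeta_i$. Transporting eigenvalue mass from surplus to deficit positions, then re-sorting, preserves both the concavity of $\vec m$ and the Lorenz inequality $V_l(\vec m)\leq V_l(\vec\eta)$, provided mass is only moved from higher eigenvalues to lower ones; the minimality property of $l^*$ ensures this procedure can be carried out level by level on the Lorenz curve. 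The resulting $\rho'$ realizes the optimal decomposition, saturating the upper bound and establishing $p^*=\min_l V_l(\rho)/V_l(\sigma)$.
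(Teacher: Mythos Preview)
Your upper bound is correct and slightly different from the paper's: you invoke the Ky Fan variational formula $V_l(\tau)=\max_{\operatorname{rank}\Pi=l}\operatorname{tr}(\Pi\tau)$ and bound it from below at the projector onto the top eigenspace of $\sigma$, whereas the paper applies Weyl's inequality to the decomposition $\rho'=p\sigma+(1-p)X$ to obtain the pointwise bound $\eta'_i\ge p\zeta_i$ and then sums. Both routes are standard and yield the same conclusion.

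The achievability argument, however, has a real gap. Your simple candidate $m_i=p^*\zeta_i$ for $i\le l^*$ and $m_i=\eta_i$ for $i>l^*$ does satisfy $V_l(\vec m)\le V_l(\vec\eta)$, but, as you acknowledge, the pointwise constraint $m_i\ge p^*\zeta_i$ can fail for $i>l^*$. Your proposed remedy---redistribute surplus mass to deficit positions---is asserted rather than proved. You would need to show that at every deficit position the surplus accumulated at \emph{preceding} positions suffices, and that after each transfer the vector remains sorted and majorized by $\vec\eta$ (a Robin Hood transfer that overshoots can increase some $V_l$). The phrase ``level by level on the Lorenz curve'' gestures at an induction but does not carry it out.

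The paper resolves this cleanly with an explicit iterative block decomposition that your sketch would ultimately have to reinvent. After setting $l_1$ to be the \emph{largest} index achieving the minimum ratio $r^{(1)}=p^*$, one defines
\[
r^{(i)}=\min_{l>l_{i-1}}\frac{V_l(\rho)-V_{l_{i-1}}(\rho)}{V_l(\sigma)-V_{l_{i-1}}(\sigma)},
\]
takes $l_i$ to be the largest index achieving it, and iterates. One then checks $p^*=r^{(1)}<r^{(2)}<\cdots$, so that setting $m_j=r^{(i)}\zeta_j$ on the $i$-th block automatically gives $m_j\ge p^*\zeta_j$, while the very definition of $r^{(i)}$ guarantees that the $i$-th block of $\rho$ majorizes $r^{(i)}$ times the $i$-th block of $\sigma$. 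This block structure is the missing ingredient that turns your outline into a proof.
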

\begin{proof}

The proof is split into two parts: first we apply Weyl's inequality and the definition of majorization to derive a contradiction if it were possible to achieve a value of p large than $p^*$. Next, we adapt the techniques of \cite{vidal1999entanglement} to provide a protocol achieving $p=p^*$.

To achieve our first goal we begin by showing that given Eq. \eqref{NO Goal}:
\begin{equation}
V_l\left(\rho\right)\geq p V_l\left(\sigma\right), \quad\forall l. \label{prob mon}
\end{equation}

To prove this, we make use of Weyl's inequality \cite{weyl1912asymptotische, horn2012matrix}. Given $n \times n$ Hermitian matrices, $A$, $B$ and $C$ such that $A=B+C$, let $\left\{a_i\right\}_{i=1}^{n}$, $\left\{b_i\right\}_{i=1}^{n}$ and $\left\{c_i\right\}_{i=1}^{n}$ be their respective eigenvalues arranged in descending order. Weyl's inequality then states that:
\begin{equation}
b_i+c_n \leq a_i \leq b_i+c_1,
\end{equation}
for all $i$. Applying this to $\rho'$, $\sigma$ and $X$, we obtain:
\begin{equation}
\eta'_i\geq p\zeta_i + \left(1-p\right) x_n, \quad \forall i,
\end{equation}
where $x_n$ is the smallest eigenvalue of $X$. As $X$ is a positive semidefinite matrix, $x_n\geq0$ and:
\begin{equation} \label{evalue bound}
\eta'_i\geq p\zeta_i, \quad \forall i.
\end{equation}
Hence:
\begin{equation}
V_l\left(\rho\right)\geq V_l\left(\rho'\right)
=\sum_{i=1}^l \eta_i'
\geq p\sum_{i=1}^{l} \zeta_i
=p V_l\left(\sigma\right),
\end{equation}
where the first inequality uses Eq. \eqref{majorization} and the second follows from Eq. \eqref{evalue bound}.

Now suppose it was possible to achieve a value of $p$ greater than $p^{*}$ in Eq. \eqref{NO p*}. Then there would exist an $l$ such that $V_l\left(\rho\right)<p V_l\left(\sigma\right)$, contradicting Eq. \eqref{prob mon}.

To show that $p^*$ is obtainable, we define the following quantities. First, define $l_1$ by:
\begin{equation}
l_1=\max\left\{l:\frac{V_l\left(\rho\right)}{V_l\left(\sigma\right)}=p^{*}\equiv r^{\left(1\right)}\right\}.
\end{equation}
Then we proceed iteratively and, provided $l_{i-1}< n$, define:
\begin{equation}
r^{\left(i\right)}=\min_{l>l_{i-1}}\frac{V_l\left(\rho\right)-V_{l_{i-1}}\left(\rho\right)}{V_l\left(\sigma\right)-V_{l_{i-1}}\left(\sigma\right)},
\end{equation}
so we have:
\begin{equation}
r^{\left(i\right)}\sum_{j=l_{i-1}+1}^{l} \zeta_j \leq \sum_{j=l_{i-1}+1}^{l} \eta_j, \quad \forall l>l_{i-1}. \label{NO Block Cond}
\end{equation}
Define $l_{i}$ by:
\begin{equation}
l_{i}=\max\left\{l:l>l_{i-1}, \frac{V_l\left(\rho\right)-V_{l_{i-1}}\left(\rho\right)}{V_l\left(\sigma\right)-V_{l_{i-1}}\left(\sigma\right)}=r^{\left(i\right)} \right\}.
\end{equation}

Note that we have $r^{(i)}>r^{(i-1)}$. To see this, first observe that for $a,b,c,d>0$:
\begin{equation} \label{eq:abcd}
\frac{a}{b}<\frac{a+c}{b+d}\Leftrightarrow \frac{a}{b}<\frac{c}{d}.
\end{equation}
Setting:
\begin{align*}
a&=V_{l_{i-1}}\left(\rho\right)-V_{l_{i-2}}\left(\rho\right),				\\
b&=V_{l_{i-1}}\left(\sigma\right)-V_{l_{i-2}}\left(\sigma\right),				\\
c&=V_{l_{i}}\left(\rho\right)-V_{l_{i-1}}\left(\rho\right),				\\
d&=V_{l_{i}}\left(\sigma\right)-V_{l_{i-1}}\left(\sigma\right)				,
\end{align*}
so $\frac{a}{b}=r^{\left(i-1\right)}$ and $\frac{c}{d}=r^{\left(i\right)}$, then:
\begin{equation*}
\frac{a+c}{b+d}=\frac{V_{l_{i}}\left(\rho\right)-V_{l_{i-2}}\left(\rho\right)}{V_{l_{i}}\left(\sigma\right)-V_{l_{i-2}}\left(\sigma\right)}>r^{\left(i-1\right)}=\frac{a}{b},
\end{equation*}
where the inequality follows from the definition of $r^{\left(i-1\right)}$. Using Eq.~\eqref{eq:abcd}, the claim that $r^{(i)}>r^{(i-1)}$ now follows. Overall, this protocol generates a set of $l_i$ such that $0=l_0<l_1<\hdots<l_k=n$ and a set of $r_i$ such that $p^{*}=r^{\left(1\right)}<\hdots<r^{\left(k\right)}$.


Now we split $\rho$ and $\sigma$ into blocks and define:
\begin{align}
\rho_i&=\textrm{diag}\left(\eta_{l_{i-1}+1},\hdots,\eta_{l_i}\right),\\
\sigma_i&=\textrm{diag}\left(\zeta_{l_{i-1}+1},\hdots,\zeta_{l_i}\right).
\end{align}
Then from Eq. (\ref{NO Block Cond}) (and the fact that equality occurs when $l=l_i$), $\rho_i$ majorizes $r^{\left(i\right)}\sigma_i$ and we can perform:
\begin{equation}\label{blocks}
\rho_i\stackrel{NO}{\longrightarrow}r^{\left(i\right)}\sigma_i=p^{*}\sigma_i+\left(r^{\left(i\right)}-p^{*}\right) \sigma_i, \quad \forall i.
\end{equation}
With a bit of massaging and recombining the blocks, this is the same form as Eq. (\ref{NO Goal}) with $p=p^*$ and the blocks of $X$ being defined by:
\begin{equation}
X_i=\frac{r^{\left(i\right)}-p^{*}}{1-p^{*}}\sigma_i.
\end{equation}
\end{proof}
Note that as the endpoints of the Lorenz curves coincide at $(1,1)$ and $\eta_1>0$, we are guaranteed that $0< p^* \leq 1$.

If we want to obtain $\sigma$ from $\rho$ with probability $p^*$ rather than have it as part of a probabilistic mixture  as per Eq. \eqref{NO Goal}, we can do so by performing a two outcome measurement, with measurement operators $\{\sqrt{M},\sqrt{\mathbb{I}-M}\}$, where the blocks of $M$ are given by:
\begin{equation} \label{POVM}
M_i=\textrm{diag}\left(\frac{p^{*}}{r^{\left(i\right)}},\hdots,\frac{p^{*}}{r^{\left(i\right)}}\right).
\end{equation}
To see that $M$ is a valid measurement, we note that in general $0 < \frac{p^{*}}{r^{\left(i\right)}}\le 1$. Hence both $\{\sqrt{M},\sqrt{\mathbb{I}-M}\}$ are well defined, and their squares trivially add up to the identity.

After applying this measurement to $\rho'$ and reading the result, we will have either:
\begin{equation}
\sqrt{M^{\phantom{\dagger}}} \rho' \sqrt{M^\dagger} = p^* \sigma,
\end{equation}
or 
\begin{equation}
\sqrt{\left(\mathbb{I}-M\right)^{\phantom{\dagger}}} \rho' \sqrt{\left(\mathbb{I}-M\right)^\dagger}=\left(1-p^*\right)X.
\end{equation}
However, performing this measurement is outside of the class of Noisy Operations and hence costs work. As such, if a general two outcome measurement is allowed without taking its cost into account, it can be possible to transform $\rho$ into $\sigma$ with probability greater than $p^*$. For example, if $\rho$ and $\sigma$ are qubits, we can convert $\rho$ into $\sigma$ with certainty using this extra resource. Firstly we add an additional qubit in the maximally mixed state and measure it in the computational basis. This results in a pure state, either $\ket{0}$ or $\ket{1}$. As these majorize all other qubit states we can use it to obtain any $\sigma$ with certainty.

\subsection{Nonuniformity of transition under Noisy Operations}\label{work}

If it is not possible to deterministically convert $\rho$ into $\sigma$ using Noisy Operations, to perform the transformation with certainty will cost some resource, in the form of {\it nonuniformity}. For instance, if we add some pure
states of sufficiently high dimension, a previously impossible transition will become possible. Adding these additional pure states can be thought of as the analogue to adding work. Similarly, if $\rho$ can be converted into $\sigma$ using Noisy Operations, it may be possible to extract some nonuniformity (e.g. by transforming some maximally mixed states into pure states). This is the analogue of extracting work. More generally, we shall extract or expend the equivalent of work using {\it sharp states}. These sharp states,  as discussed in the next subsection, will serve as a natural unit for the nonuniformity resource. We will compute the nonuniformity of transition in terms of a finite set of ratios of monotones. This is done in a similar manner to \cite{gour2015resource}, although we show that the minimization can be done over fewer points.

\subsubsection{Sharp States}

Quantifying the optimal amount of work of transition for the more general Thermal Operations was considered in \cite{HO-limitations, egloff2015measure}. We shall denote the Noisy Operations equivalent of work, the nonuniformity of transition, by $I_{\rho\rightarrow\sigma}$ . If nonuniformity must be added, the quantity is negative, while if we can extract nonuniformity, it will be positive. For $|I_{\rho\rightarrow\sigma}|=\log{\frac{d}{j}}$, we define an associated \emph{sharp state} \cite{gour2015resource} by:
\begin{equation}\label{sharp}
s_{|I_{\rho\rightarrow\sigma}|}=\textrm{diag}\biggl(\underbrace{\frac{1}{j},\hdots,\frac{1}{j}}_{j},\underbrace{0,\hdots,0\vphantom{\frac{1}{j}}}_{d-j}\biggr).
\end{equation}
Appending a sharp state $I_{\log{\frac{d}{j}}}$ to the system is equivalent to introducing $\log{\frac{d}{j}}$ units of nonuniformity.
See Figure \ref{Lorenz} for an example of a sharp state's Lorenz curve. The state $s_{|I_{\rho\rightarrow\sigma}|}$ is such that:
\begin{align}\label{workdef}
\begin{array}{rclc}
\rho\otimes s_{|I_{\rho\rightarrow\sigma}|}&\stackrel{NO}{\longrightarrow}&\sigma,&\quad\textrm{if }I_{\rho\rightarrow\sigma} \le 0,\\
\rho&\stackrel{NO}{\longrightarrow}&\sigma\otimes s_{|I_{\rho\rightarrow\sigma}|},&\quad\textrm{if }I_{\rho\rightarrow\sigma}>0.
\end{array}
\end{align}
In terms of Lorenz curves, tensoring a state $\rho$ with a sharp state $s_I$ has the effect of compressing the Lorenz curve of $\rho$ by a factor of $2^{-I}$ with respect to the $x$-axis \cite{HO-limitations}.

\subsubsection{Monotones for Noisy Operations and the nonuniformity of transition}

The function $V_l\left(\rho\right)$ is equal to the height of the Lorenz curve of $\rho$ at $x=\frac{l}{n}$. An alternative set of monotones, $L_{y}\left(\rho\right)$ where $0\leq y\leq 1$, can be defined as the shortest horizontal distance between the Lorenz curve of $\rho$ and the $y$-axis at $y$. Note that these functions \emph{never decrease} under Noisy Operations. In particular:
\begin{align}
\begin{split}
L_{y_k}\left(\rho\right)&=\frac{k}{n}, \textrm{ for } y_k=\sum_{i=1}^{k}\eta_i, \quad 1\leq k< \textrm{rank}\left(\rho\right),\\
L_{1}\left(\rho\right)&=\frac{\textrm{rank}\left(\rho\right)}{n}.
\end{split}
\end{align}
If we define the set $\mathcal{D}\left(\sigma\right)$ by:
\begin{equation}
\mathcal{D}\left(\sigma\right)=\left\{\sum_{i=1}^{k}\zeta_i\right\}_{k=1}^{\textrm{rank}\left(\sigma\right)},
\end{equation}
then a transition from $\rho$ to $\sigma$ is achievable with certainty under Noisy Operations if and only if:
\begin{equation} \label{NO Work Cond}
L_{y}\left(\rho\right)\leq L_{y}\left(\sigma\right), \quad \forall y\in \mathcal{D}\left(\sigma\right). 
\end{equation}
That it is sufficient to consider only $y\in\mathcal{D}\left(\sigma\right)$ will be justified below.

The horizontal monotones, $L_{y}$, also allow us to quantify the optimal work of transition that is required or extracted in going from $\rho$ to $\sigma$:
\begin{lemma} \label{NO Work Exp}
Given two states $\rho$ and $\sigma$, under Noisy Operations:
\begin{equation}
2^{-I_{\rho\rightarrow\sigma}}=\max_{y\in\mathcal{D}\left(\sigma\right)}\frac{L_y\left(\rho\right)}{L_y\left(\sigma\right)}.
\end{equation}
\end{lemma}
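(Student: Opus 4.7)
The plan is to combine two ingredients already available in the paper: the geometric effect that tensoring with a sharp state has on a Lorenz curve, and the characterization of deterministic Noisy Operations transitions in terms of the horizontal monotones $L_y$ given in Eq.~\eqref{NO Work Cond}. The idea is to reduce the optimization of nonuniformity to an optimization over a compression factor of the Lorenz curve, and then to recognize this factor as $\max_{y\in\mathcal{D}(\sigma)} L_y(\rho)/L_y(\sigma)$.

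First, recall from the discussion immediately after Eq.~\eqref{workdef} that tensoring a state $\tau$ with a sharp state $s_I$ compresses the Lorenz curve of $\tau$ horizontally by a factor $2^{-I}$; in terms of horizontal monotones this means $L_y(\tau\otimes s_I)=2^{-I}L_y(\tau)$ for every $y\in[0,1]$ on which $L_y(\tau)$ is defined. I would treat the two cases $I_{\rho\rightarrow\sigma}\leq 0$ and $I_{\rho\rightarrow\sigma}>0$ in parallel. In the first case, Eq.~\eqref{workdef} asks for the smallest value $|I|$ such that $\rho\otimes s_{|I|}\to\sigma$ is possible; applying Eq.~\eqref{NO Work Cond} to the pair $(\rho\otimes s_{|I|},\sigma)$ gives $2^{|I|}L_y(\rho)\leq L_y(\sigma)$ for every $y\in\mathcal{D}(\sigma)$, so the minimal $|I|$ is determined by $2^{-I_{\rho\rightarrow\sigma}}=2^{|I|}=\max_{y\in\mathcal{D}(\sigma)}L_y(\rho)/L_y(\sigma)$. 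In the second case, Eq.~\eqref{workdef} asks for the largest $I$ such that $\rho\to\sigma\otimes s_I$, and Eq.~\eqref{NO Work Cond} now yields $L_y(\rho)\leq 2^{-I}L_y(\sigma)$ for every $y\in\mathcal{D}(\sigma\otimes s_I)=\mathcal{D}(\sigma)$; the largest admissible $I$ is again characterized by $2^{-I_{\rho\rightarrow\sigma}}=\max_{y\in\mathcal{D}(\sigma)}L_y(\rho)/L_y(\sigma)$. Both cases therefore collapse to the same formula.

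The step I expect to require the most care is the justification that the $\max$ can be restricted to the discrete set $\mathcal{D}(\sigma)$ rather than all $y\in[0,1]$, since Eq.~\eqref{NO Work Cond} itself is only stated on this set and we need the same restriction to suffice for the infimal $I$. The reason is that $L_y(\sigma)$ is linear in $y$ on each interval $[y_{k-1},y_k]$ with endpoints in $\mathcal{D}(\sigma)$ (corresponding to a single linear segment of $\sigma$'s Lorenz curve), while $L_y(\rho)$, being the inverse of the concave Lorenz curve of $\rho$, is convex in $y$. Hence for any constant $c$ the function $L_y(\rho)-c\,L_y(\sigma)$ is convex on each such interval, so attains its maximum at an endpoint; consequently the inequality $L_y(\rho)\leq c L_y(\sigma)$ holds on all of $[0,1]$ iff it holds at every $y\in\mathcal{D}(\sigma)$, and the tightest admissible $c$ is exactly $\max_{y\in\mathcal{D}(\sigma)}L_y(\rho)/L_y(\sigma)$.

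Finally, I would note the mild bookkeeping point that $\rho$ and $\sigma\otimes s_I$ may live on spaces of different dimensions; by operation \emph{i)} of Noisy Operations we can always pad with maximally mixed states to equalize dimensions without changing any $L_y$ value, so the comparison of horizontal monotones is well-defined. Putting the two cases together delivers the claimed equality.
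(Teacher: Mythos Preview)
Your approach mirrors the paper's: both exploit the horizontal rescaling of Lorenz curves under tensoring with sharp states to express $2^{-I_{\rho\to\sigma}}$ as a maximum of ratios $L_y(\rho)/L_y(\sigma)$, and both restrict the maximization from $[0,1]$ to $\mathcal{D}(\sigma)$ via the piecewise linearity of $L_y(\sigma)$ together with the convexity of $L_y(\rho)$.

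One organizational caveat: the paper explicitly defers the justification of Eq.~\eqref{NO Work Cond} until \emph{after} this lemma (see the sentence immediately following that equation, and the line right after the proof of the lemma), so invoking Eq.~\eqref{NO Work Cond} as an input here is circular. Fortunately your third paragraph already contains exactly the argument needed: it shows that $L_y(\rho)\leq c\,L_y(\sigma)$ on all of $[0,1]$ is equivalent to the same inequality on $\mathcal{D}(\sigma)$. So you can simply reorganize---start from majorization stated as $L_y(\rho)\leq L_y(\sigma)$ for all $y\in[0,1]$, apply the scaling $L_y(\tau\otimes s_I)=2^{-I}L_y(\tau)$, and only then restrict via your convexity argument. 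Two small slips to clean up: in the case $I_{\rho\to\sigma}\leq 0$ the inequality should read $2^{-|I|}L_y(\rho)\leq L_y(\sigma)$ rather than $2^{|I|}L_y(\rho)\leq L_y(\sigma)$ (your conclusion is unaffected); and as defined in the paper $\mathcal{D}(\sigma\otimes s_I)$ strictly contains $\mathcal{D}(\sigma)$, though the extra points all lie on linear segments of the curve and hence add no new constraints.
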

\begin{proof}
To prove this, we make use of the geometrical structure of Lorenz curves and the properties of $I_{\rho \rightarrow \sigma}$.
Note that we have:
\begin{equation} \label{W1}
2^{-I_{\rho\rightarrow\sigma}}=\max_{y\in\left[0,1\right]}\frac{L_y\left(\rho\right)}{L_y\left(\sigma\right)},
\end{equation}
as this follows from the fact that to obtain the optimal value of $I_{\rho\rightarrow\sigma}$, we wish to rescale the Lorenz curve of $\rho$ with respect to the $x$-axis in such a way that it just majorizes that of $\sigma$ - the curves should touch but not cross. The amount that we need to rescale by is given by Eq. \eqref{W1}.

We now show that it is sufficient to maximize over $y\in\mathcal{D}\left(\sigma\right)$. Let $s_0=0$ and $s_k=\sum_{i=1}^{k}\zeta_i$ for $1\leq k\leq\textrm{rank}\left(\sigma\right)$. Then, for $1\leq j\leq\textrm{rank}\left(\sigma\right)$, as the Lorenz curve of $\sigma$ is a straight line on the interval $\left[s_{j-1},s_{j}\right]$ and the Lorenz curve of $\rho$ is concave:
\begin{align} \label{interval}
\max_{y\in\left[s_{j-1},s_{j}\right]} \frac{L_y\left(\rho\right)}{L_y\left(\sigma\right)} \leq \max_{r\in\left[0,1\right]} \frac{rL_{s_{j-1}}\left(\rho\right)+\left(1-r\right)L_{s_{j}}\left(\rho\right)}{r\frac{j-1}{n}+\left(1-r\right)\frac{j}{n}}.
\end{align}
It is straightforward to check that the maximum value occurs at either $r=0$ or $r=1$. We can thus replace the inequality in Eq. \eqref{interval} with an equality and it follows that it suffices to maximize over $y\in\mathcal{D}\left(\sigma\right)$.
\end{proof}
As $\rho\stackrel{\textit{NO}}{\longrightarrow}\sigma$ is possible if and only if $I_{\rho\rightarrow\sigma}\geq0$, the finite set in Eq. \eqref{NO Work Cond} is justified.

Note that in \cite{gour2015resource} it was shown that it is possible to calculate $I_{\rho\rightarrow\sigma}$ by performing an optimization over the ratios calculated at the `elbows' (see Figure \ref{Lorenz} for a definition)  of both $\rho$ and $\sigma$. In Lemma \ref{NO Work Exp} we have shown that it suffices to consider just the `elbows' of $\sigma$.

\subsubsection{Bounds on the transition probability}
\label{ss:bounds}

The quantities $I_{\rho\rightarrow\sigma}$ and $I_{\sigma\rightarrow\rho}$ can be used to bound $p^{*}$ as follows:

\begin{lemma} \label{bounds}
Given two states $\rho$ and $\sigma$, under Noisy Operations:
\begin{equation}\label{bounds1}
2^{I_{\rho\rightarrow\sigma}}\leq p^{*} \leq 2^{-I_{\sigma\rightarrow\rho}},
\end{equation}
where as $p^{*}\leq 1$, we assume $I_{\rho\rightarrow\sigma}\leq0$. If $I_{\rho\rightarrow\sigma}\geq0$, $p^*=1$ and the transformation from $\rho$ to $\sigma$ can be done deterministically, potentially extracting a finite amount of nonuniformity.
\end{lemma}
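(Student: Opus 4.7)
My plan is to prove the two bounds separately, both by the same linearity trick: converting a deterministic NO protocol that consumes or extracts a sharp state into a probabilistic protocol with no sharp state at all.

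For the lower bound $p^{*}\geq 2^{I_{\rho\rightarrow\sigma}}$, assume $I_{\rho\rightarrow\sigma}\leq 0$ (otherwise $p^{*}=1$ trivially). Writing $|I_{\rho\rightarrow\sigma}|=\log(d/j)$, Eq.~\eqref{workdef} furnishes $\Lambda\in\textit{NO}$ with $\Lambda(\rho\otimes s_{|I_{\rho\rightarrow\sigma}|})=\sigma$. The key observation is that the maximally mixed state on $d$ levels splits as
\begin{equation*}
\frac{\mathbb{I}_d}{d}=\frac{j}{d}\,s_{|I_{\rho\rightarrow\sigma}|}+\left(1-\frac{j}{d}\right)\bar{s},
\end{equation*}
with $\bar{s}$ the uniform state on the orthogonal $d-j$ levels. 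Since attaching $\mathbb{I}_d/d$ is free in NO, composing with $\Lambda$ and using its linearity produces $\rho\stackrel{\textit{NO}}{\longrightarrow}(j/d)\,\sigma+(1-j/d)\,\Lambda(\rho\otimes\bar{s})$, which is the form of Eq.~\eqref{NO Goal} with $p=j/d=2^{I_{\rho\rightarrow\sigma}}$.

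For the upper bound $p^{*}\leq 2^{-I_{\sigma\rightarrow\rho}}$, the inequality is trivial when $I_{\sigma\rightarrow\rho}<0$, so assume $I\equiv I_{\sigma\rightarrow\rho}\geq 0$ and use Eq.~\eqref{workdef} to pick $\Gamma\in\textit{NO}$ with $\Gamma(\sigma)=\rho\otimes s_{I}$. For any valid protocol $\rho\stackrel{\textit{NO}}{\longrightarrow}p\sigma+(1-p)X$, linearity of $\Gamma$ yields the NO transition $\rho\stackrel{\textit{NO}}{\longrightarrow}p\,\rho\otimes s_{I}+(1-p)\Gamma(X)$. Thus $p$ is an achievable probability for the transition $\rho\to\rho\otimes s_{I}$, and Theorem~\ref{NO Theorem} then forces
\begin{equation*}
p\leq \min_{l}\frac{V_{l}(\rho)}{V_{l}(\rho\otimes s_{I})}.
\end{equation*}

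To finish, I would compute this minimum explicitly with Lorenz curves. Letting $L^{\rho}(x)$ denote the continuous Lorenz function (so $V_{l}(\rho)=L^{\rho}(l/n)$), tensoring with $s_{I}$ compresses the curve by $2^{-I}$ along the $x$-axis, giving $L^{\rho\otimes s_{I}}(x)=L^{\rho}(x\cdot 2^{I})$ for $x\leq 2^{-I}$ and $L^{\rho\otimes s_{I}}(x)=1$ otherwise. On $(0,2^{-I}]$, concavity of $L^{\rho}$ (equivalently, $L^{\rho}(x)/x$ non-increasing) gives $L^{\rho}(x)/L^{\rho}(x\cdot 2^{I})\geq 2^{-I}$, saturated as $x\to 0^{+}$. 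On $[2^{-I},1]$ the ratio is $L^{\rho}(x)\geq x\geq 2^{-I}$ because $\rho$ majorizes $\mathbb{I}/n$. Hence the minimum is exactly $2^{-I_{\sigma\rightarrow\rho}}$. The main technical obstacle is this tail estimate: past the support of $s_{I}$ the ratio reduces to $L^{\rho}(x)$ alone and could a priori dip below $2^{-I}$, and ruling this out depends on the elementary but essential fact that every state majorizes the maximally mixed state.
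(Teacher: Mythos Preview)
Your lower bound is essentially the paper's own argument: split the free maximally mixed ancilla into the sharp state $s_{|I_{\rho\to\sigma}|}$ plus its complement and push the convex combination through the deterministic NO map. The paper phrases it slightly differently (tracing out the ancilla at the end) but the content is identical.

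Your upper bound is correct but takes a genuinely different route. The paper argues directly via the single monotone $I_\infty(\cdot)=\log(\lambda_{\max}(\cdot)\,n)$: monotonicity and additivity under tensoring with a sharp state give $I_{\sigma\to\rho}\le I_\infty(\sigma)-I_\infty(\rho)=\log(\zeta_1/\eta_1)$, whence $2^{-I_{\sigma\to\rho}}\ge \eta_1/\zeta_1=V_1(\rho)/V_1(\sigma)\ge p^*$ by Theorem~\ref{NO Theorem}. You instead compose any achievable protocol with the reverse map $\Gamma$ to reduce to the self-transition $\rho\to\rho\otimes s_I$, and then evaluate Theorem~\ref{NO Theorem} on that pair. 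This buys you a pleasingly symmetric proof (the same linearity trick drives both inequalities) at the price of a longer computation; the paper's route is shorter because it isolates exactly which monotone ($V_1$, equivalently $I_\infty$) is doing the work.

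One remark on your final paragraph: the ``tail estimate'' you flag as the main obstacle is actually unnecessary. From Theorem~\ref{NO Theorem} you have $p\le \min_l V_l(\rho)/V_l(\rho\otimes s_I)$, and to conclude $p\le 2^{-I}$ you only need this minimum to be \emph{at most} $2^{-I}$, not exactly $2^{-I}$. That follows immediately from the first discrete index (or your $x\to 0^+$ saturation), where the ratio is $(\eta_1/d)/(\eta_1/j)=j/d=2^{-I}$. The region $x>2^{-I}$, and the fact that $\rho\succ\mathbb{I}/n$, never enters the bound on $p$; you are proving more than you need when you show the minimum equals $2^{-I}$.
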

\begin{proof}
We start proving with the lower bound, giving a protocol which achieves $p=2^{W_{\rho\rightarrow\sigma}}$. The upper bound is derived by considering properties of the purity of the least sharp state that majorizes $\rho$.

Assuming $|W_{\rho\rightarrow\sigma}|=\log{\frac{d}{j}}$ for simplicity, and defining $\mathbb{I}_{d}$ to be the maximally mixed state of a $d$ level system:
\begin{align}\label{eq:Ydef}
\rho\stackrel{NO}{\longrightarrow}&\rho\otimes \mathbb{I}_{d}, \\ \nonumber
=&\frac{j}{d}\rho\otimes s_{\log{\frac{d}{j}}}+\frac{d-j}{d}\rho\otimes s_{\log{\frac{d}{d-j}}},\\ \nonumber
\stackrel{NO}{\longrightarrow}&\frac{j}{d}\sigma\otimes \mathbb{I}_{d}+ \frac{d-j}{d} Y,\\ \nonumber
\stackrel{NO}{\longrightarrow}&\frac{j}{d}\sigma + \frac{d-j}{d} \textrm{Tr}_B Y, \nonumber
\end{align}
where $Y$ is the state obtained by applying the second Noisy Operation to $\rho\otimes s_{\log{\frac{d}{d-j}}}$. Using this protocol, we obtain something of the form Eq. (\ref{NO Goal}) with $p=2^{I_{\rho\rightarrow\sigma}}$ and $X=\textrm{Tr}_B Y$. As $p^{*}$ is the maximum value of $p$ obtainable in Eq. \eqref{NO Goal}, we derive the lower bound.

We now consider the upper bound and to obtain a useful bound, assume $I_{\sigma\rightarrow\rho}>0$. We define $I_\infty (\rho)$ as the nonuniformity of formation of $\rho$ under NO\cite{dahlsten2011inadequacy}, given by $I_\infty (\rho)=-\log{\eta_1n}$, and hence let $s_{I_{\infty}\left(\rho\right)}$ be the least sharp state that majorizes $\rho$ (see Figure \ref{Lorenz}). Note that $I_{\infty}$ decreases under Noisy Operations and is additive across tensor products \cite{gour2015resource}. In terms of the eigenvalues of $\rho$ and $\sigma$:
\begin{align}
\begin{split}
s_{I_{\infty}\left(\rho\right)}&=s_{\log\left(\eta_1 n\right)},\\
s_{I_{\infty}\left(\sigma\right)}&=s_{\log\left(\zeta_1 n\right)}.
\end{split}
\end{align}
By definition, as $I_{\sigma\rightarrow\rho}>0$:
\begin{equation}
\sigma\stackrel{\textrm{NO}}{\longrightarrow}\rho\otimes s_{I_{\sigma\rightarrow\rho}}.
\end{equation}
Now, using first the monotonicity of $I_{\infty}$ and then the additivity:
\begin{align*}
I_{\infty}\left(\sigma\right)&\geq I_{\infty}\left(\rho\otimes s_{I_{\sigma\rightarrow\rho}}\right),\quad\textrm{(monotonicity)}\\
&=I_{\infty}\left(\rho\right)+I_{\sigma\rightarrow\rho}.\quad\textrm{(additivity)}\\
\Rightarrow I_{\sigma\rightarrow\rho}&\leq I_{\infty}\left(\sigma\right)-I_{\infty}\left(\rho\right),\\
&=\log\left(\zeta_1 n\right)-\log\left(\eta_1 n\right),\\
&=\log\left(\frac{\zeta_1}{\eta_1}\right).\\
\Rightarrow 2^{-I_{\sigma\rightarrow\rho}}&\geq\frac{\eta_1}{\zeta_1},\\
&=\frac{V_1\left(\rho\right)}{V_1\left(\sigma\right)},\\
&\geq p^{*}, \quad\textrm{(by definition)}
\end{align*}
as required.
\end{proof}

From Eq. \eqref{bounds1} we can see that when $I_{\rho\rightarrow\sigma}=-I_{\sigma\rightarrow\rho}\equiv I$ (that is, in a reversible transition) then $p^*=2^{-I}$. This occurs when either $\sigma\stackrel{\text{NO}}{=}\rho \otimes s_{|I|}$ or $\rho\stackrel{\text{NO}}{=}\sigma \otimes s_{|I|}$ depending on whether $I$ is positive or negative (when $I\geq0$ the transition is deterministic).
In terms of Lorenz curves, this means that the curves of $\rho$ and $\sigma$ have the same shape up to re-scaling by a factor $2^{-I}$. 
In particular, this is the case when both $\rho$ and $\sigma$ are sharp states, where both Lorenz curves are straight lines.

This result can be applied in the thermodynamic regime of many independent copies. 
If we want to perform a transition such as: 
\begin{equation}
\rho^{\otimes N} \rightarrow \sigma^{\otimes N},
\end{equation}
we need an amount of work given by $-N I_{\rho\rightarrow\sigma}$. Hence, the probability of success in such a case is bound by:
\begin{equation}
2^{N I_{\rho\rightarrow\sigma}}\leq p^{*} \leq 2^{-N I_{\sigma\rightarrow\rho}},
\end{equation}
which tends to $0$ for large $N$. This can be seen as a way in which in the thermodynamic limit statistical fluctuations are suppressed.

\subsubsection{Lorenz curve interpretation}

In terms of Lorenz curves, adding $I_{\rho \rightarrow \sigma}$ nonuniformity to $\rho$ to make the transition possible is equivalent to compressing the Lorenz curve with respect to the $x$-axis by a ratio $2^{-I_{\rho \rightarrow \sigma}}$, such that the curve of $\rho$ lies just above and touches that of $\sigma$. Hence, a compression by $p^*\ge 2^{-I_{\rho \rightarrow \sigma}}$ must mean that there is at least a point of the compressed curve just below or touching $\sigma$. A proof of this is given in Figure \ref{lowerbound1}. 
\begin{figure}
\includegraphics[width=0.46\textwidth]{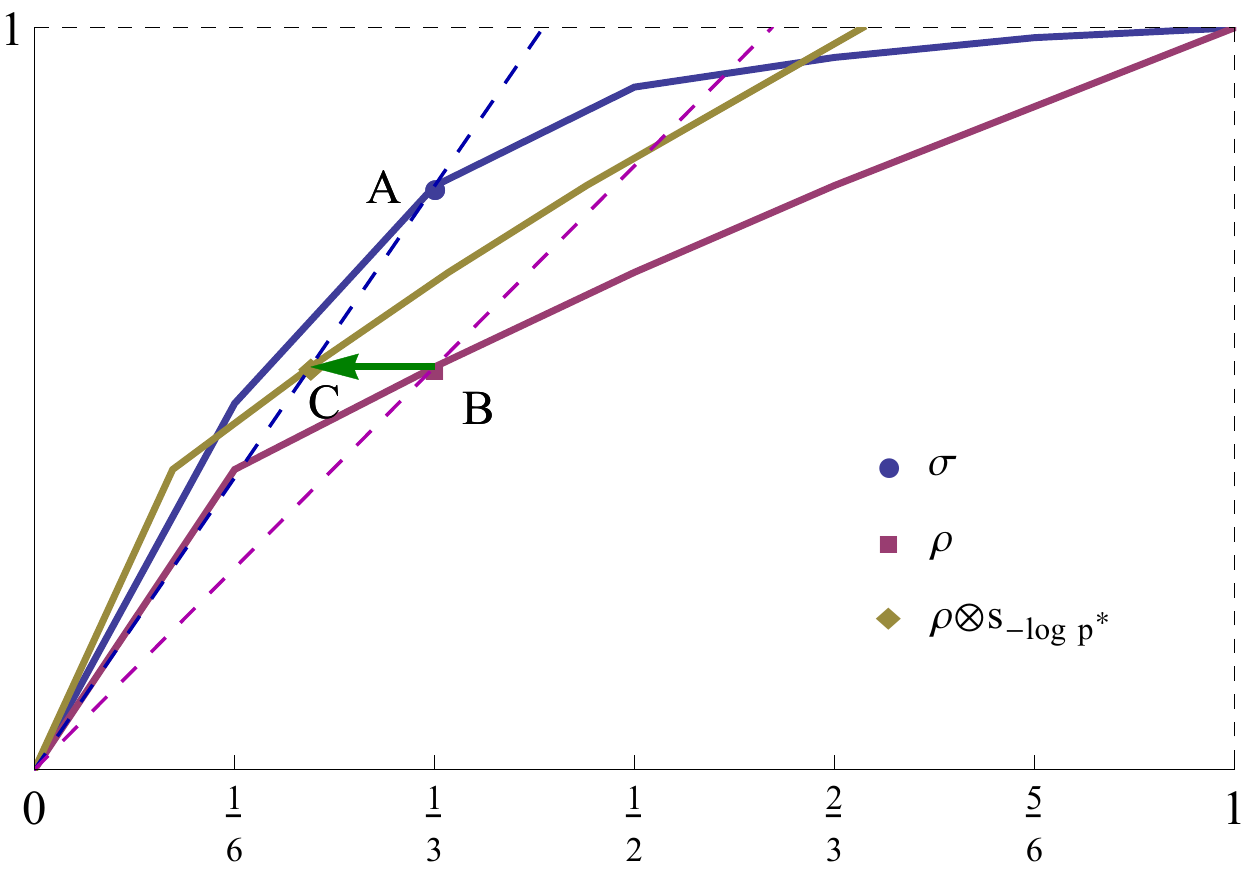}
\caption{We plot the curves of $\rho$, $\sigma$ and $\rho$ compressed by $p^*$ (with respect to the $x$-axis). The points $A$ and $B$ at which the vertical ratio between the curves of $\rho$ and $\sigma$ is maximum (which sets $l_1$ and $p^*$), and the sharp states that pass through those points are also shown as dashed lines. After compressing the Lorenz curve of $\rho$ by a ratio of $p^*$, the point $B$ will be taken to $C$, which will always either be below the curve of $\sigma$ or just touching it. This proves the lower bound in Eq. \eqref{bounds1}.} \label{lowerbound1}
\end{figure}

Extracting $I_{\sigma \rightarrow \rho}$ nonuniformity from $\sigma$ before performing NO into $\rho$ is equivalent to compressing the curve of $\rho$ by a ratio of $2^{-I_{\sigma \rightarrow \rho}}$ such that the curve of $\sigma$ lies just above and touches that of $\rho$. Hence, to prove the upper bound in Eq. \eqref{bounds1}, it suffices to show that in compressing the curve of $\rho$ by $p^*$ at least one point of the new curve must lie above or touch that of $\sigma$. In Figure \ref{upperbound1} we show a diagrammatic version of the proof given in Section \ref{work}.

\begin{figure}
\includegraphics[width=0.46\textwidth]{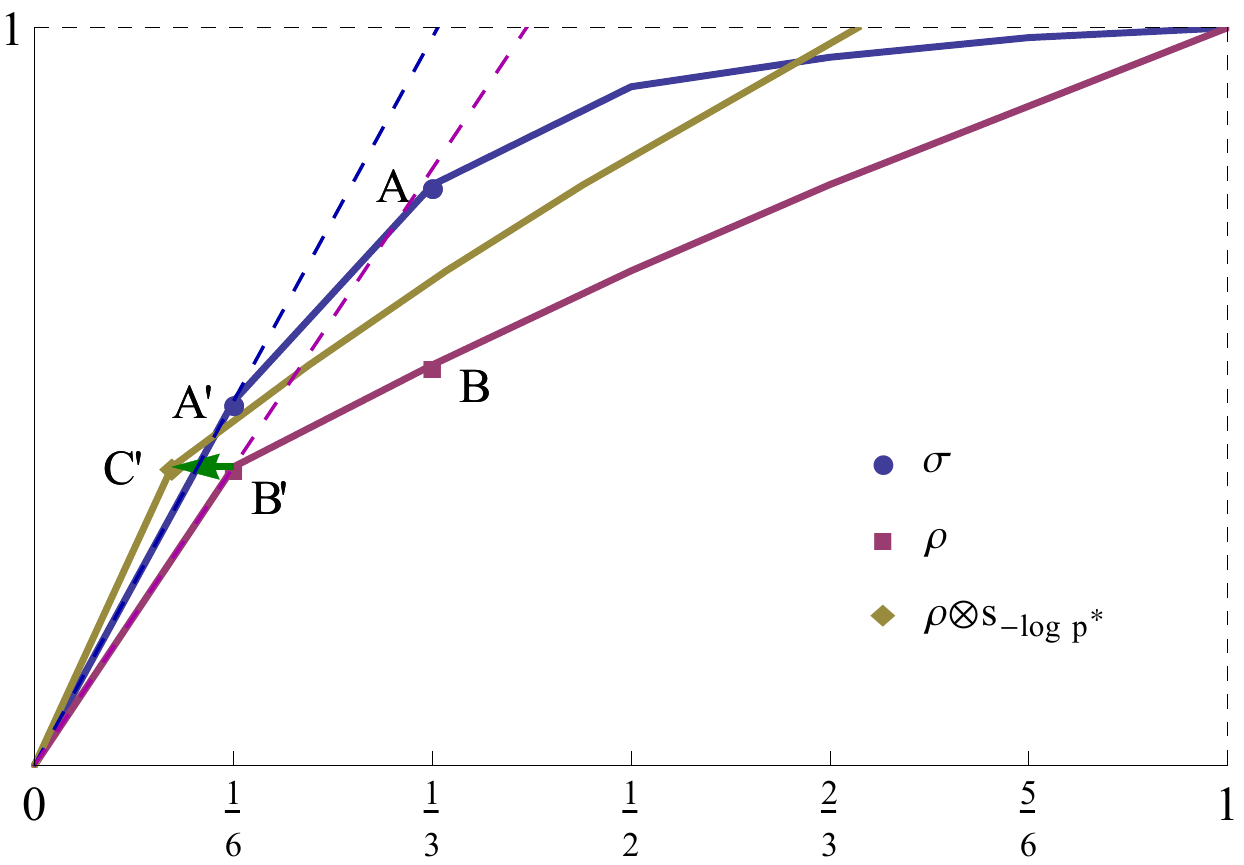}
\caption{We plot the curves of $\rho$, $\sigma$ and $\rho$ compressed by $p^*$ (with respect to the $x$-axis). The points $A$ and $B$ at which the vertical ratio between the curves of $\rho$ and $\sigma$ is maximum (which sets $l_1$ and $p^*$) and the sharp states $I_{\infty}\left(\rho\right)$ and $I_{\infty}\left(\sigma\right)$ are also shown as dashed lines. Given that for sharp states all bounds are saturated, the appropriate maximum vertical and horizontal ratios coincide, and are $\eta_1/\zeta_1$, the ratio of the heights of $B'$ and $A'$. But this ratio is, by definition, bigger than or equal to $p^*$, the ratio between $A$ and $B$. This means that if the curve of $\rho$ is compressed by $p^*$, the point $B'$ is mapped to $C$ just above or touching the curve of  $\sigma$, proving the upper bound of Eq. \eqref{bounds1}.}\label{upperbound1}
\end{figure}

It should be noted that with Lemma \ref{bounds} we are proving a general statement about convex Lorenz curves. This is, that the minimum vertical ratio of two given curves ($p^*$) is lower and upper bounded respectively by the minimum and the maximum horizontal ratio of the two.

\section{Probability of transition under Thermal Operations}
\label{sec:TO}

Noisy Operations can be generalized to include systems with arbitrary, finite Hamiltonians. This is the resource theory of Thermal Operations \cite{Streater_dynamics,janzing2000thermodynamic,HO-limitations,brandao2013resource}. Within this scheme, the allowed operations are: \emph{i)} a system with any Hamiltonian in the Gibbs state of that Hamiltonian can be added, \emph{ii)} any subsystem can be discarded through tracing out and \emph{iii)} any energy-conserving unitary, i.e. those unitaries that commute with the total Hamiltonian, can be applied to the global system. These operations model the thermodynamics of a system in the presence of an ideal heat bath \cite{HO-limitations,brandao2013resource}. Note that while the heat bath the system is in contact with is assumed to be large, thermal operations include processes that only interact with a small part of the bath. As such, limitations derived with respect to such an idealized bath can be regarded as truly fundamental. Even though the bath size can be large, the system of interest is fixed, and can for example, be only a single system. They thus describe processes beyond the thermodynamic limit.

In general, the initial and final systems may have different Hamiltonians but, by making use of the `switching qubit' construction in \cite{HO-limitations}, we can w.l.o.g. assume that the initial and final Hamiltonians are the same. As such, the results in this section will assume this but in Section \ref{sec:Ham Change} we will discuss how a changing Hamiltonian affects them. In Appendix H of \cite{brandao2013resource} it was shown that other mainstream thermodynamical paradigms such as time dependent Hamiltonians, the insertion of interaction terms between system, bath and work systems and various master equations are all included within the scope of Thermal Operations.

In the absence of catalysts, and provided the final state is block-diagonal in the energy eigenbasis, it was established in \cite{HO-limitations} that a transition from $\rho$ to $\sigma$ is possible under Thermal Operations if and only if $\rho$ thermo-majorizes $\sigma$. This is similar in form to the majorization criteria of Noisy Operations and can be visualized in terms of \emph{thermo-majorization diagrams} which are similar to Lorenz curves but with two crucial differences.

Suppose $\rho$ is also block-diagonal in the energy eigenbasis with eigenvalue $\eta_i$ associated with energy level $E_i$, for $1\leq i\leq n$. Firstly, rather than ordering according to the magnitude of $\eta_i$, we instead \emph{$\beta$-order} them, listing $\eta_i e^{\beta E_i}$ in descending order.

The second difference is that we no longer plot the $\beta$-ordered $\eta_i$ at evenly spaced intervals. Instead we plot the points:
\begin{equation}
\left\{\left(\sum_{i=1}^{k}e^{-\beta E_{i}^{\left(\rho\right)}},\sum_{i=1}^{k}\eta_i^{\left(\rho\right)}\right)\right\}_{k=1}^{n},
\end{equation} 
where the superscript $\rho$ on $E_i$ and $\eta_i$ indicates that they have been $\beta$-ordered and this ordering depends on $\rho$. Thermo-majorization states that $\rho$ can be deterministically converted into a block-diagonal $\sigma$ if and only if its thermo-majorization curve never lies below that of $\sigma$, as is shown in Figure \ref{Thermomaj}. This is analogous to the case of Noisy Operations. In what follows, we assume that the $\eta_i$ have been $\beta$-ordered unless otherwise stated.

If $\rho$ is not block-diagonal in the energy eigenbasis, to determine if a transition is possible we consider the thermo-majorization curve associated with the state formed by decohering $\rho$ in the energy eigenbasis. This state, $\rho_D$, is given by:
\begin{equation} \label{eq:decohere}
\rho_D=\sum_{i=1}^{n}\ketbra{E_i}{E_i}\rho\ketbra{E_i}{E_i},
\end{equation}
where $\ket{E_i}$ is the eigenvector of the system's Hamiltonian associated with energy level $E_i$. The operation of decohering $\rho$ to give $\rho_D$ is a Thermal Operation and commutes with all other Thermal Operations \cite{brandao2013resource}. A transition from $\rho$ to $\sigma$, where $\sigma$ is block-diagonal in the energy eigenbasis, can be made deterministically if and only if the thermo-majorization curve of $\rho_D$ is never below that of $\sigma$.

Finally, if $\sigma$ is not block-diagonal, a transition from $\rho$ to $\sigma$ is possible only if $\rho_D$ thermo-majorizes $\sigma_D$ and finding a set of sufficient conditions is an open question.


In what follows, the thermo-majorization curve of a state with coherences is defined to be the thermo-majorization curve of that state decohered in the energy eigenbasis as per Eq. \eqref{eq:decohere}.

\begin{figure}
\includegraphics[width=0.46\textwidth]{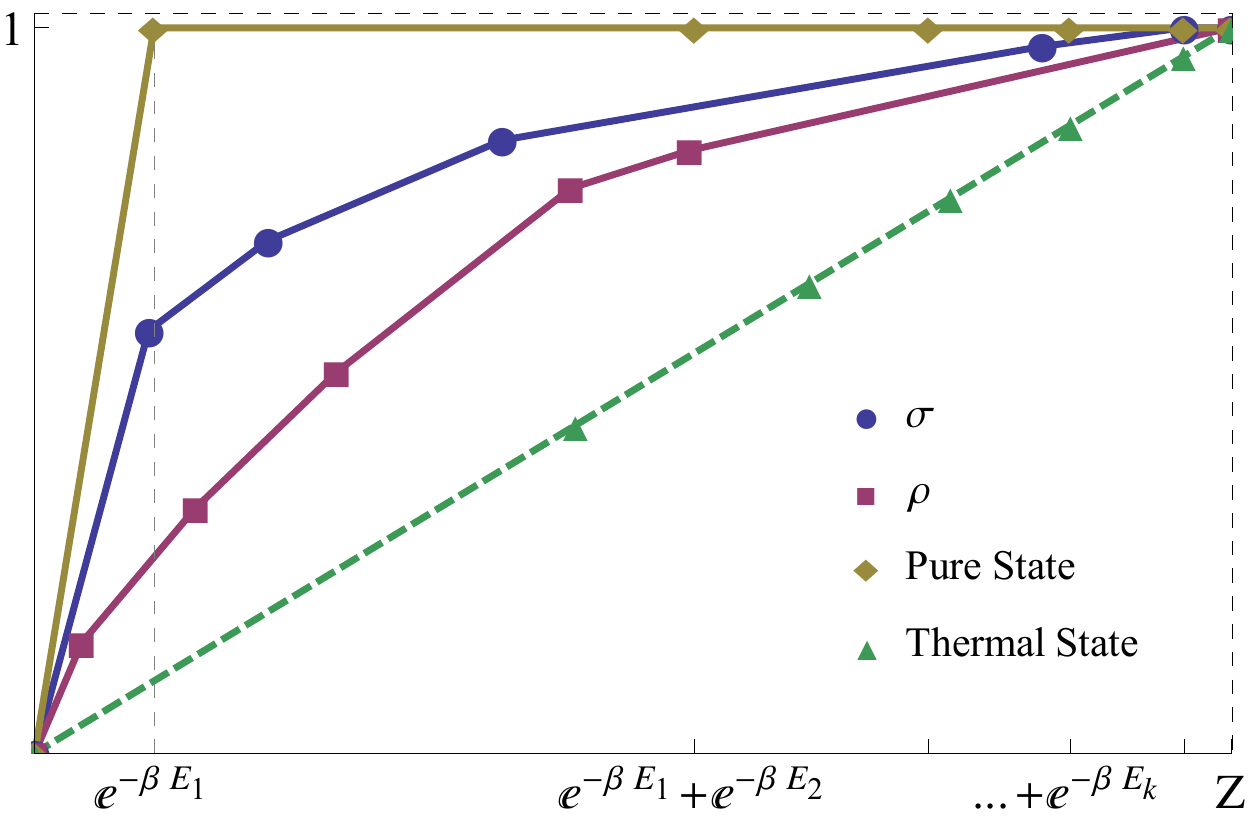}
\caption  {We show the $\beta$-ordered thermo-majorization diagrams for various states of the system. Note that different states may have different $\beta$-orderings and the markings on the $x$-axis correspond to one particular $\beta$-ordering. The curves always end at $(Z,1)$. The thermo-majorization criteria states that we can take a state to another under Thermal Operations if and only if the curve of the initial state is above that of the final state. Hence, in this case (provide $\rho$ is block-diagonal in the energy eigenbasis) there is a set of operations such that $\sigma \stackrel{TO}{\rightarrow} \rho$, but not for the reverse process.} \label{Thermomaj}
\end{figure}

Similarly to how Eq. \eqref{monotones} defines monotones for the Noisy Operations resource theory, the height of the $\beta$-ordered thermo-majorization curves provides monotones for Thermal Operations. If we denote the height of the thermo-majorization curve of $\rho$ at $x$ by $\tilde{V}_x\left(\rho\right)$, for $0\leq x\leq Z$ (where $Z$ is the partition function), then by the thermo-majorization criteria, this function is non-increasing under Thermal Operations. In particular, for block-diagonal $\rho$, we have:
\begin{equation}
\tilde{V}_{x_k}\left(\rho\right)=\sum_{i=1}^{k}\eta_i^{\left(\rho\right)}, \quad {\textrm{where } x_k=\sum_{i=1}^{k}e^{-\beta E_{i}^{\left(\rho\right)}}}.
\label{eq:TOmonotones}
\end{equation}
These monotones also give us an alternative way of stating the thermo-majorization criteria:
\begin{theorem} \label{Finite set theorem}
Suppose $\sigma$ is block-diagonal in the energy eigenbasis.
Let $\mathcal{L}\left(\sigma\right)=\left\{\sum_{i=1}^{k}e^{-\beta E_{i}^{\left(\sigma\right)}}\right\}_{k=1}^{n}$. Then $\rho$ can be deterministically converted into $\sigma$ under Thermal Operations if and only if:
\begin{equation} \label{Discrete thermo majorization}
\tilde{V}_x\left(\rho\right)\geq \tilde{V}_x\left(\sigma\right), \quad\forall x\in\mathcal{L}\left(\sigma\right).
\end{equation}
\end{theorem}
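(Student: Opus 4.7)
The plan is to show the two directions of the equivalence separately, using the thermo-majorization criterion of \cite{HO-limitations} together with the geometric observation that the curve of $\sigma$ is piecewise linear with elbows exactly at the abscissae in $\mathcal{L}(\sigma)$, while the curve of $\rho$ (i.e. of $\rho_D$) is concave by construction of the $\beta$-ordering.

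The forward direction is immediate: if $\rho\stackrel{TO}{\longrightarrow}\sigma$ then by the thermo-majorization criterion of \cite{HO-limitations} we have $\tilde V_x(\rho)\ge \tilde V_x(\sigma)$ for every $x\in[0,Z]$, and in particular for every $x\in\mathcal{L}(\sigma)$. So the real content is the converse.

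For the converse, suppose that $\tilde V_x(\rho)\ge \tilde V_x(\sigma)$ for every $x\in\mathcal{L}(\sigma)$; I would show that this in fact implies the inequality for all $x\in[0,Z]$, at which point the thermo-majorization theorem of \cite{HO-limitations} hands us the Thermal Operation achieving the transition. Order the elements of $\mathcal{L}(\sigma)$ as $0=x_0<x_1<\cdots<x_n=Z$, where $x_k=\sum_{i=1}^k e^{-\beta E_i^{(\sigma)}}$; by the definition of $\beta$-ordering, $\sigma$'s thermo-majorization curve is the linear interpolation of the data points $(x_k,\tilde V_{x_k}(\sigma))$, and in particular is affine on each segment $[x_{k-1},x_k]$. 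Pick an arbitrary $x\in[x_{k-1},x_k]$ and write $x=\lambda x_{k-1}+(1-\lambda)x_k$ with $\lambda\in[0,1]$. Then linearity of $\sigma$'s curve on the segment gives
\begin{equation}
\tilde V_x(\sigma)=\lambda\,\tilde V_{x_{k-1}}(\sigma)+(1-\lambda)\,\tilde V_{x_k}(\sigma),
\end{equation}
while concavity of $\rho$'s thermo-majorization curve (a direct consequence of the fact that $\beta$-ordering arranges the slopes in non-increasing order) gives
\begin{equation}
\tilde V_x(\rho)\;\ge\;\lambda\,\tilde V_{x_{k-1}}(\rho)+(1-\lambda)\,\tilde V_{x_k}(\rho).
\end{equation}
Applying the hypothesis at the two endpoints $x_{k-1},x_k\in\mathcal{L}(\sigma)$ to the right-hand side yields $\tilde V_x(\rho)\ge\tilde V_x(\sigma)$, as desired.

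The main non-trivial point to nail down is the geometric claim that $\rho$'s curve is concave and $\sigma$'s curve is affine between consecutive elements of $\mathcal{L}(\sigma)$; this is exactly the analog of the argument used for Lorenz curves in Lemma \ref{NO Work Exp} (where the concavity of $\rho$ and piecewise-linearity of $\sigma$ were similarly exploited to reduce a continuous maximum to the finite set of elbows), and no new ideas are required to port it to the $\beta$-ordered setting. The only mild subtlety is the treatment of degeneracies and of states with $\rank(\rho_D)<n$, which are handled by appending zero-weight energy levels / truncating $\mathcal{L}(\sigma)$ at $\rank(\sigma)$; the concavity/linearity argument is unaffected.
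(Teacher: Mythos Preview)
Your proposal is correct and follows essentially the same geometric argument as the paper: forward direction trivially from thermo-majorization, converse by exploiting that $\sigma$'s curve is affine between consecutive elements of $\mathcal{L}(\sigma)$ while $\rho$'s curve is concave, so the inequality at the elbows propagates to all of $[0,Z]$.

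The only cosmetic difference is packaging. Where you write the convex-combination inequality directly, the paper bundles the same step into an explicit intermediate state $\rho_\sigma$, defined by linearly interpolating $\rho$'s curve at the abscissae $t_i\in\mathcal{L}(\sigma)$; concavity gives $\rho\succ_T\rho_\sigma$, and since $\rho_\sigma$ and $\sigma$ are both piecewise linear with the same breakpoints, comparing them reduces to comparing endpoint heights. This is exactly your argument with the chord named as a state. The paper's version has the minor advantage that $\rho_\sigma$ is reused in the protocol for Theorem~\ref{pthermal}, but for the present statement your direct formulation is equally complete.
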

\begin{proof}
To prove this theorem, we make use of the concavity properties of thermo-majorization curves.
Suppose $\rho\stackrel{TO}{\longrightarrow}\sigma$. Then by thermo-majorization, $\tilde{V}_x\left(\rho\right)\geq\tilde{V}_x\left(\sigma\right)$, for $0\leq x\leq Z$ and in particular Eq. \eqref{Discrete thermo majorization} holds.

Conversely, suppose Eq. \eqref{Discrete thermo majorization} holds and, setting $t_0=0$, label the elements of $\mathcal{L}\left(\sigma\right)$ arranged in increasing order by $t_i$ for $i=1$ to $n$. Then on the interval $\left[t_{i-1},t_{i}\right]$, for $1\leq i\leq n$, the thermo-majorization curve of $\sigma$ is given by a straight line. From $\rho$, define the block-diagonal state $\rho_{\sigma}$ by the thermo-majorization curve:
\begin{equation} \label{rho_sigma}
\left\{\left(t_i,\tilde{V}_{t_i}\left(\rho\right)\right)\right\}_{i=1}^{n},
\end{equation}
and note that due to the concavity of thermo-majorization curves, $\rho$ thermo-majorizes $\rho_{\sigma}$. On the interval $\left[t_{i-1},t_{i}\right]$, $1\leq i\leq n$, the thermo-majorization curve of $\rho_{\sigma}$ is also given by a straight line. The construction of this state $\rho_\sigma$ is shown in Figure \ref{Theorem5-2}.

\begin{figure}
\includegraphics[width=0.46\textwidth]{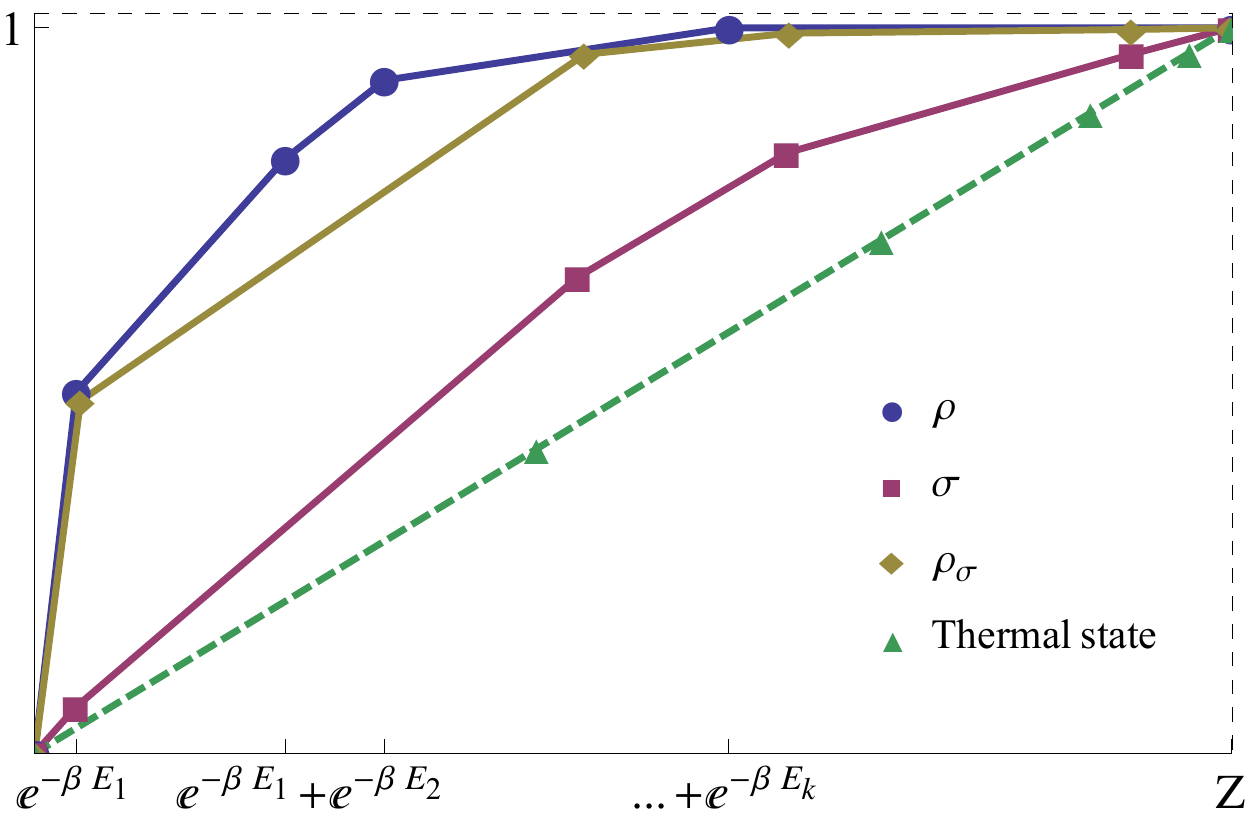}
\caption{Here we illustrate the construction of the state $\rho_\sigma$ used in the proof of Theorem \ref{Finite set theorem}. The points of the curve $\rho$ that are at the same horizontal position as the elbows of $\sigma$ are joined, and by concavity the resultant curve is always below $\rho$.} \label{Theorem5-2}
\end{figure}

As $\tilde{V}_{t_i}\left(\rho_{\sigma}\right)=\tilde{V}_{t_i}\left(\rho\right)$, $\forall i$ by construction, Eq. \eqref{Discrete thermo majorization} implies that $\tilde{V}_{t_i}\left(\rho_{\sigma}\right)\geq \tilde{V}_{t_i}\left(\sigma\right)$, $\forall i$. Hence on the interval $\left[t_{i-1},t_{i}\right]$, $1\leq i\leq n$, the thermo-majorization curves for $\rho_\sigma$ and $\sigma$, and therefore $\rho$ and $\sigma$, do not cross. As this holds for all $i$ and the intervals cover $[0,Z]$ the thermo-majorization curve of $\rho$ is never below that of $\sigma$ and we can perform $\rho\stackrel{TO}{\longrightarrow}\sigma$ deterministically.
\end{proof}
If we define the number of `elbows' in the thermo-majorization curve of $\sigma$ to be $j$, this reduces thermo-majorization to checking $j$ criteria and generalizes Lemma 17 of \cite{gour2015resource} to Thermal Operations. Note also that if $\sigma$ is not block-diagonal in the energy eigenbasis,  Eq. \eqref{Discrete thermo majorization} gives a necessary but not sufficient condition for the transition from $\rho$ to $\sigma$ to be possible.

\subsection{Non-deterministic transformations} \label{ssec:non-det TO}

Having defined the appropriate monotones for Thermal Operations, we are now in a position to investigate non-deterministic transformations and prove a theorem analogous to Theorem \ref{NO Theorem}.

\begin{theorem}\label{pthermal}
Suppose we wish to transform the state $\rho$ to the state $\sigma$ under Thermal Operations. The maximum value of $p$, $p^*$, that can be achieved in the transition:
\begin{equation} \label{TO Goal}
\rho\stackrel{\textit{TO}}{\longrightarrow}\rho'= p\sigma+\left(1-p\right)X,
\end{equation}
is such that:
\begin{equation} \label{TO p*}
p^{*}\leq\min_{x\in\mathcal{L}\left(\sigma\right)} \frac{\tilde{V}_x\left(\rho\right)}{\tilde{V}_x\left(\sigma\right)}.
\end{equation}
Furthermore, if $\sigma$ is block-diagonal in the energy eigenbasis, there exists a protocol that achieves the bound.
\end{theorem}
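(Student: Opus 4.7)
The plan is to follow the same two-part structure as the proof of Theorem \ref{NO Theorem}, with the NO monotones $V_l$ replaced by the thermo-majorization monotones $\tilde{V}_x$ from Eq. \eqref{eq:TOmonotones}, evaluated on the finite set $\mathcal{L}(\sigma)$ of elbows of $\sigma$. First I would establish the upper bound in full generality, and then show achievability by adapting the iterative block decomposition when $\sigma$ is block-diagonal.

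For the upper bound, two facts combine. The first is monotonicity: by the thermo-majorization criterion and Theorem \ref{Finite set theorem}, $\tilde{V}_x$ is non-increasing under Thermal Operations, so $\tilde{V}_x(\rho) \geq \tilde{V}_x(\rho')$. The second is a ``Weyl-like'' inequality for mixtures: for any state $X$ and any $x \in \mathcal{L}(\sigma)$,
\begin{equation}
\tilde{V}_x\bigl(p\sigma + (1-p)X\bigr) \geq p\,\tilde{V}_x(\sigma).
\end{equation}
To prove this I would decohere $\rho'$ in the energy eigenbasis (a TO that leaves $\tilde{V}_x$ invariant by definition), writing $\rho'_D = p\sigma_D + (1-p)X_D$. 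For $x = \sum_{i\in S} e^{-\beta E_i}$, where $S$ is the subset of top $\beta$-ordered energy levels of $\sigma$ realizing $\tilde{V}_x(\sigma)$, the height of the thermo-majorization curve of $\rho'_D$ at $x$ is at least $\sum_{i\in S}(\rho'_D)_{ii}$, since choosing exactly those levels is feasible in the maximization defining $\tilde{V}_x$. Positivity of $X$ then gives the bound. Combining the two facts yields $p \leq \tilde{V}_x(\rho)/\tilde{V}_x(\sigma)$ for every $x\in\mathcal{L}(\sigma)$, hence $p^* \leq \min_{x\in\mathcal{L}(\sigma)} \tilde{V}_x(\rho)/\tilde{V}_x(\sigma)$.

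For achievability when $\sigma$ is block-diagonal, I would reduce to the NO-style argument by first passing to the surrogate state $\rho_\sigma$ from the proof of Theorem \ref{Finite set theorem}, whose thermo-majorization curve is the piecewise-linear interpolation of that of $\rho$ at the elbows of $\sigma$. Concavity guarantees $\rho \stackrel{TO}{\to} \rho_\sigma$, and crucially $\rho_\sigma$ and $\sigma$ share the same set of elbow positions $\mathcal{L}(\sigma)$, so the ratios
\begin{equation}
r^{(i)} = \min_{l>l_{i-1}} \frac{\tilde{V}_{l}(\rho_\sigma) - \tilde{V}_{l_{i-1}}(\rho_\sigma)}{\tilde{V}_{l}(\sigma) - \tilde{V}_{l_{i-1}}(\sigma)}, \qquad l,\, l_{i-1}\in\mathcal{L}(\sigma),
\end{equation}
are well-defined. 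Defining $l_i$ as the largest maximizer, the same elementary argument as in the NO proof (using the mediant inequality of Eq. \eqref{eq:abcd}) shows $r^{(1)} = p^* < r^{(2)} < \cdots < r^{(k)}$. Within each block the restricted sub-state of $\rho_\sigma$ then thermo-majorizes $r^{(i)}$ times the corresponding sub-state of $\sigma$, so by Theorem \ref{Finite set theorem} each block transition is implementable under TO, and reassembling them yields $\rho_\sigma \stackrel{TO}{\to} p^*\sigma + (1-p^*)X$ with $X_i = \frac{r^{(i)} - p^*}{1-p^*}\sigma_i$.

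The main obstacle is that the $\beta$-orderings of $\rho$ and $\sigma$ need not agree, so one cannot naively split $\rho$ into blocks aligned with the energy-level decomposition of $\sigma$, and individual eigenvalue inequalities of the NO type fail. Passing through $\rho_\sigma$ bypasses this by transferring the combinatorial block structure from the energy eigenbasis onto the thermo-majorization diagram, where the ratios and blocks are intrinsically geometric. The only remaining point to verify is that the scaled block-wise thermo-majorization inequalities indeed hold at every intermediate elbow of $\sigma$ inside $[l_{i-1}, l_i]$; this follows directly from the definition of $r^{(i)}$ as the \emph{minimum} over all such $l$, reducing the global block transition to a finite set of inequalities checked by construction.
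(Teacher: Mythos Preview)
Your proposal is correct and follows essentially the same approach as the paper: the upper bound via decohering, the componentwise inequality $\eta'_i\ge p\zeta_i$, and the observation that evaluating $\rho'_D$ along $\sigma$'s $\beta$-ordering gives a curve below the true thermo-majorization curve (what you phrase as ``feasibility in the maximization defining $\tilde V_x$'' is exactly the paper's rearrangement argument); and achievability by first passing to the surrogate $\rho_\sigma$ of Theorem~\ref{Finite set theorem} and then running the block construction of Theorem~\ref{NO Theorem}. The only cosmetic difference is that the paper invokes Weyl's inequality explicitly to handle degenerate energy levels, whereas your positivity argument implicitly relies on Schur--Horn there; both are valid.
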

\begin{proof}
Proving this result is more complicated than proving Theorem \ref{NO Theorem} due to the fact that $\rho$ and $\sigma$ may have different $\beta$-orderings. We proceed as before, first showing the bound in Eq. \eqref{TO p*} and then giving a protocol that achieves the bound when $\sigma$ is block-diagonal.

We prove the bound in Eq. \eqref{TO p*} by constructing useful intermediate curves between those of $\rho$ and $p \sigma$ to deal with differing $\beta$-orders. With these in place, the result will follow in a similar manner to Theorem \ref{NO Theorem}.

We begin by showing that given Eq. \eqref{TO Goal}:
\begin{equation}
\tilde{V}_x\left(\rho\right)\geq p \tilde{V}_x\left(\sigma\right), \quad \forall x \in [0,Z].
\end{equation}
First consider (for general $\sigma$) the maximum value of $p$ that can be achieved in attempting to convert $\rho$ into $\sigma$. As decohering is a Thermal Operation, this value of $p$ can also be achieved when attempting to convert $\rho$ into $\sigma_D$:
\begin{align*}
\rho&\stackrel{\textit{TO}}{\phantom{A}\longrightarrow\phantom{A}}\rho'_{\phantom{D}}= p\sigma+\left(1-p\right)X,\\
&\stackrel{\textit{decohere}}{\phantom{A}\longrightarrow\phantom{A}}\rho'_D= p\sigma_D+\left(1-p\right)X_D.
\end{align*}
Thus, to upper bound $p^*$, it suffices to show that Eq. \eqref{TO p*} holds for block-diagonal $\sigma$. Furthermore, w.l.o.g. we can assume that $\rho'$ and $X$ are also block-diagonal. Using Weyl's inequality as per Theorem \ref{NO Theorem} to deal with degenerate energy levels, for block-diagonal $\rho'$, $\sigma$ and $X$, we have:
\begin{equation} \label{TO evalues}
{\eta}_i'\geq p\zeta_i, \quad\forall i. 
\end{equation}

Now consider the sub-normalized thermo-majorization curve of $p \sigma$ given by the points:
\begin{equation} \label{Lorenz sigma}
\left\{\left(\sum_{i=1}^k e^{-\beta E^{\left(\sigma\right)}_i},p\sum_{i=1}^{k}\zeta_i^{\left(\sigma\right)}\right)\right\}_{k=1}^{n},
\end{equation}
and the (possibly non-concave) curve formed by plotting the eigenvalues of $\rho'$ according to the $\beta$-ordering of $\sigma$. This is given by the points:
\begin{equation} \label{Lorenz rho'sigma}
\left\{\left(\sum_{i=1}^k e^{-\beta E^{\left(\sigma\right)}_i},\sum_{i=1}^{k}{\eta'_i}^{\left(\sigma\right)}\right)\right\}_{k=1}^{n}.
\end{equation}
By Eq. \eqref{TO evalues}, the curve defined in Eq. \eqref{Lorenz rho'sigma} is never below that defined in Eq. \eqref{Lorenz sigma}.

Finally, the thermo-majorization curve of $\rho'$ is given by:
\begin{equation} \label{Lorenz rho'}
\left\{\left(\sum_{i=1}^k e^{-\beta E^{\left(\rho'\right)}_i},\sum_{i=1}^{k}{\eta'_i}^{\left(\rho'\right)}\right)\right\}_{k=1}^{n}.
\end{equation}
Note that attempting to construct a thermo-majorization curve for $\rho'$ with respect to the $\beta$-ordering of another state, as we do in Eq. \eqref{Lorenz rho'sigma}, has the effect of rearranging the piecewise linear segments of the true thermo-majorization curve. This means that they may no longer be joined from left to right in order of decreasing gradient. Such a curve will always be below the true thermo-majorization curve. 
To see this, imagine constructing a curve from the piecewise linear elements and in particular, trying to construct a curve that would lie above all other possible constructions. Starting at the origin, we are forced to choose the element with the steepest gradient - all other choices would lie below this by virtue of having a shallower gradient. We then proceed iteratively, starting from the endpoint of the previous section added and choosing the element with the largest gradient from the remaining linear segments. The construction that we obtain is the true thermo-majorization curve. A graphical description of this proof is shown in Figure \ref{Theorem5}.

\begin{figure}
\includegraphics[width=0.46\textwidth]{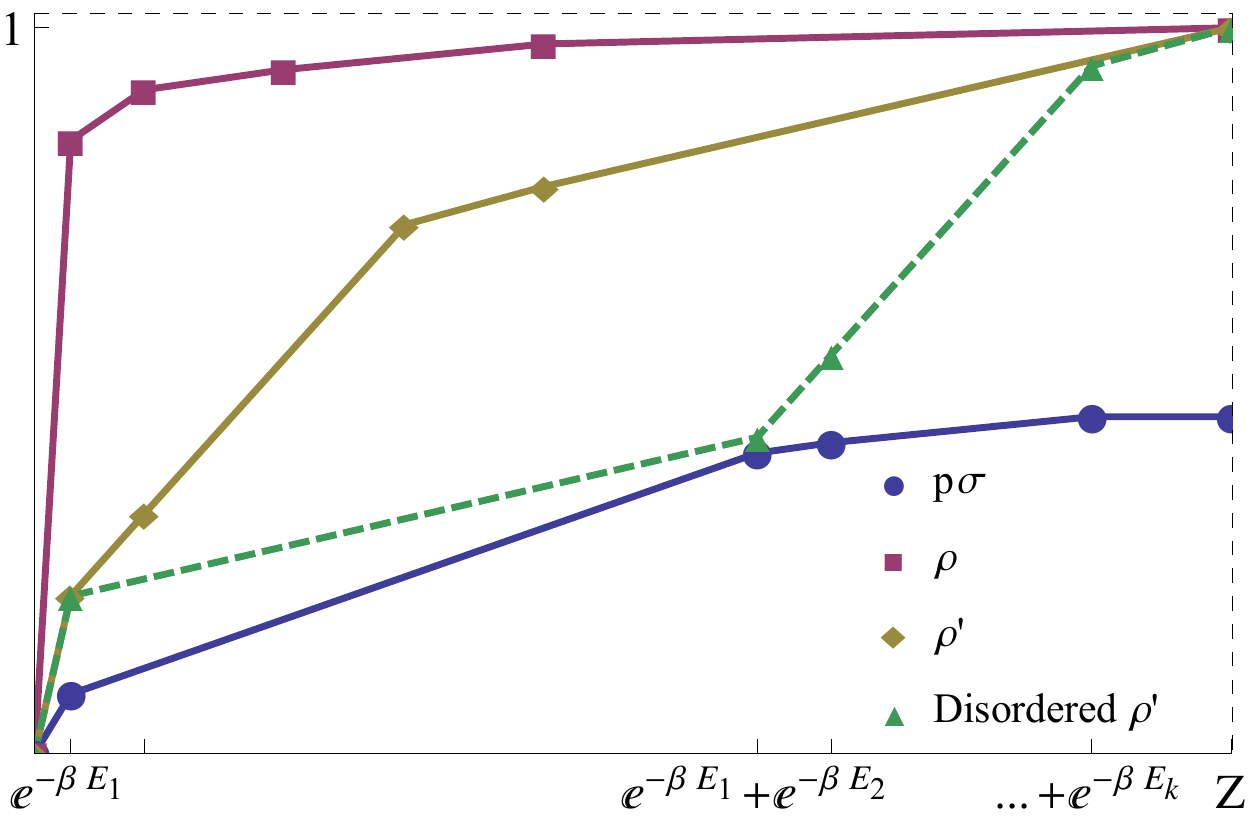}
\caption{Here we show graphically the steps of the proof of the first part of Theorem \ref{pthermal}. In the decomposition of Eq. \eqref{TO Goal} the curve $p \sigma$ must always be below that of $\rho'$ and hence also $\rho$. This sets the maximum probability $p^*$ as defined in Eq. \eqref{TO p*}. Both $p \sigma$ and the disordered $\rho'$ have the same $\beta$-ordering.} \label{Theorem5}
\end{figure}

As such, the curve in Eq. \eqref{Lorenz rho'} is never below that in Eq. \eqref{Lorenz rho'sigma}. This gives us:
\begin{equation}
\tilde{V}_x\left(\rho\right)
\geq \tilde{V}_x\left(\rho'\right)
\geq p\tilde{V}_x\left(\sigma\right),
\end{equation}
where the first inequality holds as, by definition, $\rho$ thermo-majorizes $\rho'$. In particular we have:
\begin{equation}
p^{*}\leq\min_{x\in\mathcal{L}\left(\sigma\right)} \frac{\tilde{V}_x\left(\rho\right)}{\tilde{V}_x\left(\sigma\right)}.
\end{equation}

When $\sigma$ is block-diagonal in the energy eigenbasis, a protocol that saturates the bound is:
\begin{align*}
\rho&\stackrel{\textit{TO}}{\longrightarrow}\rho_{\sigma},\\
&\stackrel{\textit{TO}}{\longrightarrow}\rho_{\sigma}'=p^* \sigma +\left(1-p^*\right) X,
\end{align*}
where $\rho_{\sigma}$ was defined in Eq. \eqref{rho_sigma} and is thermo-majorized by $\rho$. As $\rho_{\sigma}$ and $\sigma$ have the same $\beta$-ordering and:
\begin{equation}
\frac{\tilde{V}_x\left(\rho\right)}{\tilde{V}_x\left(\sigma\right)}=\frac{\tilde{V}_x\left(\rho_{\sigma}\right)}{\tilde{V}_x\left(\sigma\right)}, \quad\forall x\in\mathcal{L}\left(\sigma\right),
\end{equation}
applying the same construction used in Theorem \ref{NO Theorem} gives a strategy to produce $\rho'_{\sigma}$ that achieves:
\begin{equation}
p^*=\min_{x\in\mathcal{L}\left(\sigma\right)} \frac{\tilde{V}_x\left(\rho\right)}{\tilde{V}_x\left(\sigma\right)}.
\end{equation}
\end{proof}

\subsection{Measuring whether the transition occurred under Thermal Operations}
\label{sec:measurement}

For block-diagonal $\sigma$, after obtaining $\rho'$ through Thermal Operations we may apply the measurement defined by Eq. \eqref{POVM} to extract our target state with probability $p^*$. This can be done through a process that uses an ancilla qubit system, $Q$, that starts and ends in the state $\ket{0}$ and has associated Hamiltonian, $H_Q=\mathbb{I}_2$, a unitary that correlates the system with the ancilla and a projective measurement on the ancilla qubit. As the measurement operators are diagonal in the energy eigenbasis, we will find that the unitary is energy conserving and within the set of Thermal Operations. Furthermore, the ancilla that is used to perform the POVM 
can be returned back into it's original state. Hence the only cost we have to pay is to erase the record of the measurement outcome itself. As is well known \cite{landauer}, the cost of erasing the record is $kT\log{2}$, although if one is repeating the process many times, then it is $kTh\left(p^*\right)$ with $h\left(p^*\right)$ the binary entropy $h\left(p^*\right)=-p^*\log{p^*}-\left(1-p^*\right)\log{\left(1-p^*\right)}$ \cite{skrzypczyk2014work}.

The unitary that we shall use is given by:
\begin{align}\label{unitary}
U_{\text{SQ}}=
\begin{pmatrix}
\sqrt{M} & \sqrt{\mathbb{I}-M} \\
\sqrt{\mathbb{I}-M} & -\sqrt{M}
\end{pmatrix},
\end{align}
where $M$ is defined as per Eq. \eqref{POVM}.
Note that $U_{\text{SQ}}=U^{\dagger}_{\text{SQ}}$. Its effect on the initial joint state is:
\begin{align*}
&U_{\text{SQ}}(\rho' \otimes \ketbra{0}{0})U^{\dagger}_{\text{SQ}},  \\
=& \begin{pmatrix}
\sqrt{M} & \sqrt{\mathbb{I}-M} \\
\sqrt{\mathbb{I}-M} & -\sqrt{M}
\end{pmatrix}
\begin{pmatrix}
\rho' & 0 \\
0 & 0
\end{pmatrix}
 \begin{pmatrix}
\sqrt{M} & \sqrt{\mathbb{I}-M} \\
\sqrt{\mathbb{I}-M} & -\sqrt{M}
\end{pmatrix},\\ 
=&  \begin{pmatrix}
\sqrt{M} \rho' \sqrt{M} & \sqrt{M} \rho' \sqrt{\mathbb{I}-M} \\
\sqrt{\mathbb{I}-M}\rho' \sqrt{M} & \sqrt{\mathbb{I}-M}\rho' \sqrt{\mathbb{I}-M}
\end{pmatrix},
\\ 
=&  \begin{pmatrix}
p^* \sigma & \sqrt{M} \rho' \sqrt{\mathbb{I}-M} \\
\sqrt{\mathbb{I}-M}\rho' \sqrt{M} & (1-p^*) X
\end{pmatrix}.
\end{align*}
If we now measure the ancilla in the computational basis, the joint state will collapse to $\sigma\otimes\ketbra{0}{0}$ when the 0 outcome is observed. This happens with probability $p^*$. If the 1 outcome is observed, the joint state collapses to $X\otimes\ketbra{1}{1}$ and this happens with probability $1-p^*$. In addition, if the 1 outcome is observed, we can then apply a Pauli $Z$ to the ancilla qubit to return it to its initial state.

To see that $U_{SQ}$ commutes with the total Hamiltonian and belongs to the class of Thermal Operations, first note that the total Hamiltonian is given by:
\begin{equation}
H_{SQ}=H_S\otimes\mathbb{I}_2 + \mathbb{I}_n \otimes\mathbb{I}_2.
\end{equation}
The unitary trivially commutes with the second term so focusing on the first term, and noting that $M$ and $H_S$ are both diagonal matrices so commute, it is easy to check that:
\begin{align*}
[U_{\text{SQ}},H_S\otimes\mathbb{I}_2]&=
\begin{pmatrix}
\sqrt{M} & \sqrt{\mathbb{I}-M} \\
\sqrt{\mathbb{I}-M} & -\sqrt{M}
\end{pmatrix}
\begin{pmatrix}
H_{\text{S}} & 0 \\
0 & H_{\text{S}}
\end{pmatrix}
\\
&\quad- \begin{pmatrix}
H_{\text{S}} & 0 \\
0 & H_{\text{S}}
\end{pmatrix}
\begin{pmatrix}
\sqrt{M} & \sqrt{\mathbb{I}-M} \\
\sqrt{\mathbb{I}-M} & -\sqrt{M}
\end{pmatrix},\\
&=0.
\end{align*}
Hence $\left[U_{SQ},H_{SQ}\right]=0$.

Observe that this reasoning can be generalized to measurements with $s$ outcomes \cite{navascues2014energy}. Provided the measurement operators commute with $H_S$, the measurement can be performed using a $s$-level ancilla system with trivial Hamiltonian and a joint energy-conserving unitary. Such a measurement can be performed for free up to having to spend work to erase the record of the measurement outcome at a cost of $kT\ln s$. On the other hand, channels that are not composed of Thermal Operations (including some measurements characterized by non-diagonal operators) can be seen as a resource \cite{navascues2015nonthermal}.

\subsection{Work of transition under Thermal Operations} \label{sec: TO work of trans}

\subsubsection{Work systems}

In general, if we want a transition $\rho \rightarrow \sigma$ to be possible, work may have to be supplied. Alternatively, if a transition can be achieved with certainty, it may be possible to extract work. For the thermodynamics of small systems, the concept of \emph{deterministic work} (also referred to in the literature as single-shot or worst-case work) has been introduced \cite{dahlsten2011inadequacy,HO-limitations, aaberg-singleshot}.


Within the Thermal Operation paradigm, the optimal amount of work that must be added or gained can be quantified using the energy gap, $W$, of a 2-level system with ground state $\ket{0}$ and excited state $\ket{W}$ with energy $W$. The associated Hamiltonian is:
\begin{equation}
H=W\ketbra{W}{W}.
\end{equation}
The work of transition, $W_{\rho\rightarrow\sigma}$, is such that:
\begin{equation} \label{eq:TO work}
\begin{split}
\textrm{if }W_{\rho\rightarrow\sigma} \le 0,\quad\quad\quad\quad\quad&\\
\rho\otimes \ketbra{W_{\rho\rightarrow\sigma}}{W_{\rho\rightarrow\sigma}}&\stackrel{TO}{\longrightarrow}\sigma\otimes \ketbra{0}{0},\\
\textrm{if }W_{\rho\rightarrow\sigma}>0,\quad\quad\quad\quad\quad&\\
\rho\otimes \ketbra{0}{0}&\stackrel{TO}{\longrightarrow}\sigma \otimes \ketbra{W_{\rho\rightarrow\sigma}}{W_{\rho\rightarrow\sigma}}.
\end{split}
\end{equation}

Defining work in such a way enables the quantification of the worst-case work of a process. When $W_{\rho\rightarrow\sigma}$ is negative, it can be interpreted as the smallest amount of work that must be supplied to guarantee the transition takes place. If it is positive, it is the largest amount of work we are guaranteed to extract in the process. As the work system is both initially and finally in a pure state, no entropy is contained within it and its energy change must be completely due to work being exchanged with the system. Given the energy-conservation law that Thermal Operations follow (equivalent to the first law), this idea of work automatically yields a definition of what heat is. In a given operation, the change in energy of work bit, system and heat bath must be zero, and hence we can straightforwardly identify heat as the change in energy of the heat bath, or minus the change in energy on system and work bit.

As we illustrate in Figure \ref{WorkBit}, the effect of appending a pure state of work to $\rho$ is equivalent to stretching the thermo-majorization curve by a factor of $e^{-\beta W}$, and tensoring by the corresponding ground state to $\sigma$ does not change the curve \cite{HO-limitations}. In both cases the $\beta$-order is preserved, and the new curves will have a lengthened $x$-axis $\left[0,Z\left(1+e^{-\beta W}\right)\right]$. These different stretchings can serve to place the curve of $\rho$ just above that of $\sigma$, in which case $W$ will be the work of transition, in a similar way to the case of nonuniformity within Noisy Operations. 

\begin{figure}
\includegraphics[width=0.46\textwidth]{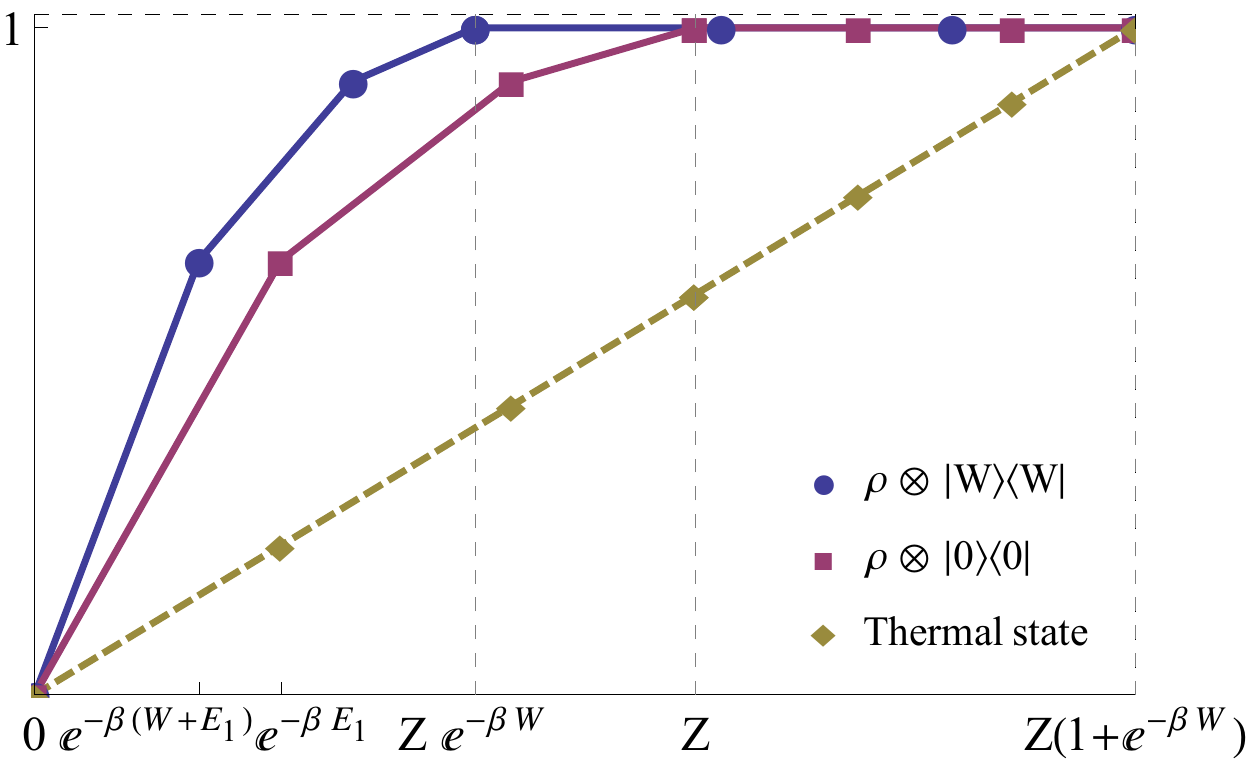}
\caption{We show the thermo-majorization curves of a state to which a work qubit in one of two pure states has been tensored. Adding this work system takes $Z \rightarrow Z\left(1+e^{-\beta W}\right)$, extending the $x$-axis. When we tensor with the ground state to form $\rho \otimes \ket{0}\bra{0}$, the curve is the same as for $\rho$ alone, but when the excited state is tensored, there is a change in the energy levels of the $\beta$-ordering, and as a result the curve of $\rho$ is compressed by a ratio of $e^{-\beta W}$.} \label{WorkBit}
\end{figure}


\subsubsection{Monotones under Thermal Operations, and the work of transition}
\label{ss:work}

In Thermal Operations, the horizontal distance between a state's thermo-majorization curve and the $y$-axis is again a monotone for each value of $y\in\left[0,1\right]$. We denote these by $\tilde{L}_y$ and, as before, they never decrease under Thermal Operations. In particular, for block-diagonal $\rho$, we have:
\begin{align}
\begin{split}
\tilde{L}_{y_k}\left(\rho\right)&=\sum_{i=1}^{k}e^{-\beta E_{i}^{\left(\rho\right)}}, \textrm{ for } y_k=\sum_{i=1}^{k}\eta_i^{(\rho)}, \quad 1\leq k< \textrm{rank}\left(\rho\right),\\
\tilde{L}_{1}\left(\rho\right)&=\sum_{i=1}^{\textrm{rank}\left(\rho\right)}e^{-\beta E_{i}^{\left(\rho\right)}},
\end{split}
\end{align}
where all sums have been properly $\beta$-ordered.

Similarly to Lemma \ref{NO Work Exp} we have:
\begin{lemma} \label{TO Work Exp}
Given two states $\rho$ and $\sigma$,
where $\sigma$ is block-diagonal in the energy eigenbasis,
 under Thermal Operations:
\begin{equation} \label{eq:TO Work Exp}
e^{-\beta W_{\rho\rightarrow\sigma}}=\max_{y\in\mathcal{D}\left(\sigma\right)}\frac{\tilde{L}_y\left(\rho\right)}{\tilde{L}_y\left(\sigma\right)}.
\end{equation}
\end{lemma}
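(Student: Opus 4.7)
The proof will mirror the structure of Lemma \ref{NO Work Exp}: first establish that the optimal work corresponds to the maximum ratio of the horizontal monotones over the continuous range $y \in [0,1]$, and then argue that it suffices to sample this maximum on the finite set $\mathcal{D}(\sigma)$. The geometric ingredient we exploit is the picture in Figure \ref{WorkBit}: attaching a work qubit in the state $\ket{W}\bra{W}$ rescales the thermo-majorization curve of the state it is tensored with horizontally by a factor $e^{-\beta W}$, while tensoring with $\ket{0}\bra{0}$ leaves the curve unchanged (apart from extending its $x$-axis).

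The plan is to first show
\begin{equation}
e^{-\beta W_{\rho\to\sigma}}=\max_{y\in[0,1]}\frac{\tilde{L}_y(\rho)}{\tilde{L}_y(\sigma)}.\label{eq:plan1}
\end{equation}
In the work-extraction regime $W_{\rho\to\sigma}>0$ the relevant transition from \eqref{eq:TO work} is $\rho\otimes\proj{0}\stackrel{TO}{\longrightarrow}\sigma\otimes\proj{W_{\rho\to\sigma}}$, whose right-hand side has the thermo-majorization curve of $\sigma$ compressed horizontally by $e^{-\beta W_{\rho\to\sigma}}$. By Theorem \ref{Finite set theorem}, the transition is feasible iff $\tilde{L}_y(\rho)\le e^{-\beta W}\tilde{L}_y(\sigma)$ for every $y\in[0,1]$, i.e.\ iff $e^{-\beta W}\ge\max_y\tilde{L}_y(\rho)/\tilde{L}_y(\sigma)$; the optimal (largest) $W$ saturates this bound. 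In the work-input regime $W_{\rho\to\sigma}<0$, the work qubit is attached to $\rho$ on the left, rescaling $\rho$'s curve by $e^{-\beta W_{\rho\to\sigma}}>1$; the same feasibility criterion, combined with the sign of $W_{\rho\to\sigma}$, again yields \eqref{eq:plan1}. The non-trivial point here is the book-keeping in the two cases: I would treat them with the single identity that tensoring with $\ket{W}\bra{W}$ scales the curve's horizontal coordinate by $e^{-\beta W}$, valid for any sign of $W$, so that the two cases collapse into one inequality.

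The second step reduces the optimization to $\mathcal{D}(\sigma)$. Let $s_0=0$ and $s_k=\sum_{i=1}^{k}\zeta_i^{(\sigma)}$, so that on each interval $[s_{j-1},s_j]$ the function $y\mapsto\tilde{L}_y(\sigma)$ is linear (the curve of $\sigma$ is piecewise linear between elbows), while $y\mapsto\tilde{L}_y(\rho)$ is convex (being the inverse of the concave thermo-majorization curve of $\rho$). Writing $y=rs_{j-1}+(1-r)s_j$ with $r\in[0,1]$, convexity and linearity give
\begin{equation}
\frac{\tilde{L}_y(\rho)}{\tilde{L}_y(\sigma)}\le\frac{r\tilde{L}_{s_{j-1}}(\rho)+(1-r)\tilde{L}_{s_j}(\rho)}{r\tilde{L}_{s_{j-1}}(\sigma)+(1-r)\tilde{L}_{s_j}(\sigma)},
\end{equation}
and the right-hand side is a Möbius function of $r$ on $[0,1]$, hence monotonic, so its maximum sits at $r=0$ or $r=1$. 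Iterating across all intervals shows that the supremum of the ratio over $[0,1]$ is attained at some $y\in\mathcal{D}(\sigma)$, giving the desired \eqref{eq:TO Work Exp}.

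The main obstacle I anticipate is purely conceptual: handling both signs of $W_{\rho\to\sigma}$ uniformly, since the paper's convention formally allows $\ket{W}$ with $W\le 0$. The cleanest way to dispose of this is to observe once and for all that the horizontal-scaling effect of attaching $\ket{W}\bra{W}$ is the same function $e^{-\beta W}$ of $W$ on either side of zero, so that the single geometric inequality on the rescaled curves underlies both regimes, and the maximum-ratio characterization \eqref{eq:plan1} is sign-agnostic. After that, the passage from $[0,1]$ to $\mathcal{D}(\sigma)$ is a direct transcription of the argument used in Lemma \ref{NO Work Exp}.
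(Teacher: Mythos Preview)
Your proposal is correct and follows essentially the same route as the paper, which simply states that the argument is ``near identical'' to Lemma~\ref{NO Work Exp}: first identify $e^{-\beta W_{\rho\to\sigma}}$ with the continuous maximum of $\tilde L_y(\rho)/\tilde L_y(\sigma)$ via the horizontal rescaling picture of Figure~\ref{WorkBit}, then reduce to the elbows of $\sigma$ using that $\tilde L_y(\sigma)$ is piecewise linear and $\tilde L_y(\rho)$ is convex, with the resulting linear-fractional function of $r$ attaining its extremum at an endpoint. The only cosmetic remark is that your citation of Theorem~\ref{Finite set theorem} in step one really invokes the equivalent (continuous) thermo-majorization criterion rather than the finite $\tilde V_x$ conditions stated there, but this is the same content.
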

The proof is near identical to that given in Lemma \ref{NO Work Exp} for Noisy Operations and so we omit it here.

If $\sigma$ is not block-diagonal, the right hand side of Eq. \eqref{eq:TO Work Exp} lower bounds $e^{-\beta W_{\rho\rightarrow\sigma}}$. 
To see this, recall that decohering commutes with Thermal Operations, and hence if the transition $\rho \otimes \ket{0}\bra{0} \rightarrow \sigma \otimes \ket{W_{\rho\rightarrow\sigma}}\bra{W_{\rho\rightarrow\sigma}}$ is possible, so is $\rho \otimes \ket{0}\bra{0} \rightarrow \sigma_D \otimes \ket{W_{\rho\rightarrow\sigma}}\bra{W_{\rho\rightarrow\sigma}}$, and hence $W_{\rho\rightarrow\sigma}\leq W_{\rho\rightarrow\sigma_D}$.

\subsubsection{Bounds on the transition probability}
\label{ss:TObounds}

We can prove a result analogous to Eq. \eqref{bounds1} for the thermal case:

\begin{lemma}\label{thermobounds}
Given two states $\rho$ and $\sigma$,
where $\sigma$ is block-diagonal in the energy eigenbasis,
under Thermal Operations:
\begin{equation}
e^{\beta W_{\rho \rightarrow \sigma}} \le p^* \le e^{-\beta W_{\sigma \rightarrow \rho_D}},
\end{equation}
where as $p^{*}\leq 1$, we assume $W_{\rho\rightarrow\sigma}\leq0$. If $W_{\rho\rightarrow\sigma}\geq0$, $p^*=1$ and the transformation from $\rho$ to $\sigma$ can be done deterministically, potentially extracting a finite amount of work.
\end{lemma}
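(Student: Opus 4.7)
The plan is to mirror the proof of Lemma~\ref{bounds} for the NO case, adapting each argument to the thermal setting. The lower bound will follow from the concavity of thermo-majorization curves combined with the work-bit transition recorded in Eq.~\eqref{eq:TO work}, while the upper bound will use a thermodynamic monotone playing the role $I_\infty$ played for NO.

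\textbf{Lower bound.} Since $\sigma$ is block-diagonal in the energy eigenbasis, Theorem~\ref{pthermal} gives the closed form $p^{*}=\min_{x\in\mathcal{L}(\sigma)}\tilde V_x(\rho)/\tilde V_x(\sigma)$. The assumption $W_{\rho\to\sigma}\le 0$ together with Eq.~\eqref{eq:TO work} guarantees a thermal operation taking $\rho\otimes|W_{\rho\to\sigma}\rangle\langle W_{\rho\to\sigma}|$ to $\sigma\otimes|0\rangle\langle 0|$. As described around Fig.~\ref{WorkBit}, tensoring the work system compresses $\rho$'s thermo-majorization curve along the $x$-axis by the factor $e^{\beta W_{\rho\to\sigma}}\in(0,1]$, while $\sigma\otimes|0\rangle\langle 0|$ has the same rising part as $\sigma$. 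Thermo-majorization of the composite states then reads, in the original coordinates, $\tilde V_y(\rho)\ge \tilde V_{y\,e^{\beta W_{\rho\to\sigma}}}(\sigma)$ for all $y$ in $\sigma$'s domain. Because $\tilde V_\cdot(\sigma)$ is concave and vanishes at the origin, $\tilde V_{\lambda y}(\sigma)\ge\lambda\,\tilde V_y(\sigma)$ for $\lambda\in[0,1]$; applying this with $\lambda=e^{\beta W_{\rho\to\sigma}}$ yields $\tilde V_y(\rho)\ge e^{\beta W_{\rho\to\sigma}}\tilde V_y(\sigma)$, and minimising over $y\in\mathcal{L}(\sigma)$ gives the lower bound.

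\textbf{Upper bound.} Introduce the monotone $F(\rho):=\max_i \eta_i\,e^{\beta E_i}$, the slope at the origin of $\rho$'s thermo-majorization curve. Since thermo-majorization gives $\tilde V_x(\rho)\ge \tilde V_x(\mathcal{T}(\rho))$ for any thermal operation $\mathcal{T}$, dividing by $x>0$ and letting $x\to 0^{+}$ (using that $\tilde V_x/x$ is non-increasing for concave $\tilde V$) shows that $F$ is non-increasing under TO; a direct spectral computation also shows that $F$ is multiplicative under tensor products with independent Hamiltonians, and that $F$ depends only on the decohered state. For the transition $\rho\to p\sigma+(1-p)X$, the decohered output $p\sigma+(1-p)X_D$ has diagonal entries $p\zeta_i+(1-p)(X_D)_{ii}\ge p\zeta_i$, so $F(p\sigma+(1-p)X)\ge pF(\sigma)$ and hence $p\le F(\rho)/F(\sigma)$. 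When $W_{\sigma\to\rho_D}\ge 0$, Eq.~\eqref{eq:TO work} provides a TO from $\sigma$ to $\rho_D\otimes|W_{\sigma\to\rho_D}\rangle\langle W_{\sigma\to\rho_D}|$; combining monotonicity, multiplicativity, and $F(|W\rangle\langle W|)=e^{\beta W_{\sigma\to\rho_D}}$ gives $F(\sigma)\ge F(\rho_D)\,e^{\beta W_{\sigma\to\rho_D}}$, and since $F(\rho)=F(\rho_D)$ this yields $p^{*}\le e^{-\beta W_{\sigma\to\rho_D}}$. When $W_{\sigma\to\rho_D}<0$ the upper bound is trivial since $e^{-\beta W_{\sigma\to\rho_D}}>1\ge p^{*}$; the final case $W_{\rho\to\sigma}\ge 0$ is immediate since the transformation is then deterministic.

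\textbf{Main obstacle.} The delicate point is reading off the correct rescaling of the thermo-majorization curve from the work-bit statement of Eq.~\eqref{eq:TO work}, with careful tracking of the $\beta$-order of the composite system and the compression factor $e^{\beta W_{\rho\to\sigma}}$; once this is done, the lower bound reduces to concavity of $\tilde V$ through the origin, and the upper bound to showing that the slope monotone $F$ is tight enough to witness the horizontal ratio that defines $W_{\sigma\to\rho_D}$.
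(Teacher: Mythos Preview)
Your proof is correct and follows essentially the same route the paper takes: the paper's one-line proof simply observes that Lemma~\ref{bounds} is a geometric statement about concave Lorenz-type curves (the minimum vertical ratio lies between the minimum and maximum horizontal ratios), and you have spelled out that geometry explicitly---using concavity through the origin for the lower bound and the initial-slope monotone $F$ (the direct TO analogue of the paper's $I_\infty$) for the upper bound. Your version is more detailed than the paper's, but the underlying argument is the same.
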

\begin{proof}
The previous Lemma \ref{bounds} can be seen as a general statement about pairs of concave Lorenz-like curves: the minimum vertical ratio is lower and upper bounded by the minimum and maximum horizontal ratios of the two. Given our previous definitions of the work of transition, and the fact that $p^*$ is the minimum vertical ratio of the two Lorenz curves (as shown in Theorem \ref{pthermal}), the result follows.

\end{proof}

The upper bound of Lemma \ref{thermobounds} can be related to the Jarzynski equality, which is found to hold for general thermal operations applied to the system in an initial thermal state,
(see \cite{alhambra2016second} for further details). The equality states that, for a given thermal operation that extracts work $w$ with some probability $p\left(w\right)$, we have that:
\begin{equation}
\langle e^{\beta w} \rangle = \sum_w e^{\beta w} p(w)=1.
\end{equation}

The Jarzynski equation is valid if the initial state is thermal, so let us take the special case of Lemma \ref{thermobounds}, of a process where we start with a thermal state $\tau$ and probabilistically go to some $\sigma$ diagonal in energy, with optimal probability $p^*$. Because $\tau$ is the fixed point, the effect of that operation is trivial:

\begin{equation} \label{eq:tautosigma}
\tau \stackrel{TO}{\longrightarrow} \rho'=\tau=p^* \sigma + (1-p^*) X.
\end{equation}
Now, if we append an idealized weight with Hamiltonian $H_W=\int_\mathbb{R}dw\, w\ketbra{w}{w}$ as a work storage system initially in the state $\ket{0}$, by definition there exists a different set of thermal operations that extracts work $W_{\sigma \rightarrow \tau}$ from $\sigma$
\begin{equation} \label{eq:sigmawork}
\sigma \otimes \ket{0}\bra{0} \stackrel{TO}{\longrightarrow} \tau \otimes \ket{W_{\sigma \rightarrow \tau}}\bra{W_{\sigma \rightarrow \tau}}.
\end{equation}
By linearity, applying this set of TO to $\tau=p^* \sigma + (1-p^*) X$ yields:
\begin{equation}
p^* \tau \otimes \ketbra{W_{\sigma \rightarrow \tau}}{W_{\sigma \rightarrow \tau}} + (1-p^*) X'_{sw},
\end{equation}
where $X'_{SW}$ is some joint system-weight state, with the weight in some work distribution $p_X(w)$. Note that this operation is applied on both system and weight, and does not need to conserve the thermal state of the system alone.
The Jarzynski equality for this operation reads:
\begin{equation}
p^* e^{\beta W_{\sigma \rightarrow \tau}} + (1-p^*)\sum_{w}  p_X(w) e^{\beta w}=1.
\end{equation}
The second term in this sum is positive, and hence we have:
\begin{equation}
p^* e^{\beta W_{\sigma \rightarrow \tau}} \le 1,
\end{equation}
which is the upper bound of Lemma \ref{thermobounds}.

Note that in situations where the upper bound is saturated (such as reversible processes with $W_{\sigma \rightarrow \tau}=-W_{\tau \rightarrow \sigma}$, when the thermomajorization curve of $\sigma$ is also a straight line) the operation in Eq. \eqref{eq:sigmawork} costs a divergent amount of work in the case of failure i.e. from the state $X$ in Eq. \eqref{eq:tautosigma}.

\subsection{Changing Hamiltonian} \label{sec:Ham Change}

Our results so far have assumed that $\rho$ and $\sigma$ are associated with the same Hamiltonian. Suppose the initial system has Hamiltonian $H_1$ and the final system Hamiltonian $H_2$. Following \cite{HO-limitations}, this scenario can be mapped to one with identical initial and final Hamiltonian, $H$, if we instead consider the transition between $\rho\otimes\ketbra{0}{0}$ and $\sigma\otimes\ketbra{1}{1}$ where:
\begin{equation}
H=H_1\otimes\ketbra{0}{0}+H_2\otimes\ketbra{1}{1}.
\end{equation}
Note that the partition function associated with $H$ is $Z=Z_1+Z_2$.

The height of the thermo-majorization curve of $\rho\otimes\ketbra{0}{0}$ with respect to $H$, is identical to that of $\rho$ with respect to $H_1$ on $[0,Z_1]$ and equal to 1 on $[Z_1,Z]$. Similarly, the height of the thermo-majorization curve of $\sigma\otimes\ketbra{1}{1}$ is identical to that of $\sigma$ on $[0,Z_2]$ and equal to 1 on $[Z_2,Z]$. Hence by extending the definition of $\tilde{V}_{x}\left(\rho\right)$ so that $\tilde{V}_{x}\left(\rho\right)=1$ for $x\geq Z_1$, we can readily apply Theorems \ref{Finite set theorem} and \ref{pthermal} to the case of changing Hamiltonians.

Note that as $\tilde{L}_y\left(\rho\right)=\tilde{L}_y\left(\rho\otimes\ketbra{0}{0}\right)$ for $0\leq y\leq1$ (and similarly for $\sigma$), changing Hamiltonians does not affect the results of Section \ref{sec: TO work of trans}.

\section{Conclusion}
\label{sec:conclusion}

Here, we have introduced a finite set of functions which, like the free energy, can only go down in the resource theory of Thermal Operations. We used these to compute the work of transition, and the maximum probability of making a transition between two states. Finally, we saw that the work of transition between the two states, and vice-versa, can be used to bound the maximum probability of making the transition.

In maximizing the value of $p$ in Eq. \eqref{Goal} to obtain $p^*$, we have attempted to maximize the fraction of $\sigma$ present in a state obtainable from $\rho$. With access to a single two outcome measurement, $\sigma$ can also be obtained from $\rho$ with probability at least $p^*$. There are other measures that one could quantify in attempting to obtain a state that behaves like $\sigma$. For example, one could consider the fidelity between $\sigma$ and a state reachable from $\rho$:
\begin{equation}
F_{\textit{TO}}\left(\rho,\sigma\right)\equiv\max_{\tilde{\rho}} \left\{ F\left(\tilde{\rho},\sigma\right): \rho\stackrel{\textit{TO}}{\longrightarrow}\tilde{\rho} \right\},
\end{equation}
where $F\left(\tilde{\rho},\sigma\right)=\tr\left[\sqrt{\sqrt{\tilde{\rho}}\sigma\sqrt{\tilde{\rho}}}\right]$ is the fidelity between the two states. Investigating this problem is an open question, but note that for diagonal $\sigma$ we have $F_{\textit{TO}}\left(\rho,\sigma\right)\geq F\left(\rho',\sigma\right)\geq\sqrt{p^*}$.

Another alternative would be to consider \emph{heralded} probabilistic transformations. Here a 2-level flag system with trivial Hamiltonian and starting in the state $\ket{0}$ is provided with the initial state $\rho$. The goal is to transform both system and flag so that a measurement on the final flag state would reveal that the system is in state $\sigma$ with probability $p$ and some other state with probability $1-p$. More concretely, one would be interested in maximizing the value of $p$ in the transformation:
\begin{equation} \label{eq:herald goal}
\rho\otimes\ketbra{0}{0}\stackrel{TO}{\longrightarrow}\hat{\rho}=p\sigma\otimes\ketbra{0}{0}+\left(1-p\right)X\otimes\ketbra{1}{1}.
\end{equation}
Due to the results in Section \ref{sec:measurement} it is clear that the maximum value of $p$ achievable in the heralded case Eq.~\eqref{eq:herald goal}, is at least at large as $p^*$ in the unheralded case Eq.~\eqref{Goal} for block-diagonal $\sigma$. 
In follow-up work to the initial version of this manuscript the converse was proven \cite{renes2015relative} and thus the two maximum probabilities are equal. In Appendix~\ref{ap:herald} we extend this analysis to consider the achievable heralded probability when $\sigma$ contains coherences or when one may use a catalyst to assist in the transformation.

At the moment, although our results regarding maximum extractable work are general, little is known about transitions when the final state is not block-diagonal in the energy eigenbasis. In such a situation, our results provide necessary conditions but are not sufficient. Finding sufficient conditions is expected to be difficult, as we do not know such conditions even for non-probabilistic transformations. For recent results on the role of coherences in quantum thermodynamics, see for example \cite{cwiklinski2015limitations,lostaglio2015description, lostaglio2014quantum, narasimhachar2015low}. Nonetheless, we are able to utilize some of these results to provide bounds on the achievable heralded probability when the target state contains coherences in energy. This is done in Appendix~\ref{ap:herald} .

Our analysis has focused on Noisy and Thermal Operations in the absence of a catalyst, i.e. an ancilla which is used to aid in a transition but returned in the same state. In \emph{Catalytic Thermal Operations}, CTO, given $\rho$ and $\sigma$, we are interested in whether there exists a state $\omega$ such that:
\begin{equation}
\rho\otimes\omega\stackrel{\textit{TO}}{\longrightarrow}\sigma\otimes\omega.
\end{equation} 
If such an $\omega$ exists, we say $\rho\stackrel{CTO}{\longrightarrow}\sigma$. There exist instances where $\rho{\stackrel{\textit{TO}}{\longarrownot\longrightarrow}}\sigma$ and yet $\rho\stackrel{\textit{CTO}}{\longrightarrow}\sigma$. Investigating when such catalytic transitions exists has led to a family of second laws of thermodynamics that apply in the single-shot regime \cite{brandao2013second}. Having access to catalysts has the potential to achieve higher values of $p$ than that defined by $p^*$ and it would be interesting to find an expression or bound for the maximum value of $p$ in the process:
\begin{equation}
\rho\stackrel{CTO}{\longrightarrow}\rho'=p\sigma+\left(1-p\right)X.
\end{equation}  
Note that a bound can be obtained from any non-increasing monotone of CTO, $M$ say, that satisfies $M\left(p\sigma+\left(1-p\right)X\right)\geq p M\left(\sigma\right)$. Bounding the maximum transition probability under Catalytic Thermal Operations is made more difficult by the fact that the generalized free energies found in \cite{brandao2013second} are not concave. However, for the case of heralded probability, the situation is somewhat easier and in Appendix~\ref{ap:herald}, we completely characterize what is achievable under CTO when the target state is block-diagonal in energy.

Another avenue of research is to generalize our result to the case where one is interested in not only maximizing the probability of obtaining a single state, but rather, finding the probability simplex of going to an ensemble of many states. Again, the fact that the monotones used in thermodynamics are not in general concave, means that straight application of the techniques used in entanglement theory \cite{jonathan1999minimal} cannot be immediately applied.

Finally, by supplying more work or demanding that extra work is extracted, the value of $p^*$ achieved can be raised or lowered. For $W\leq0$, one could calculate $p^*$ (as a function of $W$) for the states $\rho\otimes\ketbra{W}{W}$ and $\sigma\otimes\ketbra{0}{0}$. For $W>0$ the states to consider would be $\rho\otimes\ketbra{0}{0}$ and $\sigma\otimes\ketbra{W}{W}$. What is the tradeoff between $p^*$ and $W$? As an example, the solution for qubit systems in the Noisy Operations framework is given in Appendix \ref{ap:tradeoff}.

This work has focused on the probability with which a given state can fluctuate into another under a thermodynamical process. The term fluctuation is usually applied within thermodynamics to the concept of fluctuating work, a notion most famously captured by the Jarzynski equality and Crooks' theorem. These were derived under the framework of stochastic thermodynamics while our research was based on applying ideas from quantum information theory. Finding common ground between the two paradigms is likely to be beneficial to both fields and links between work-based fluctuation theorems and the resource theory operation have been developed in \cite{halpern2015introducing, salek2015fluctuations}. In work related to this paper \cite{alhambra2016second}, we shall strengthen these connections still further, formulating the idea of fluctuating work within the resource theory approach and providing new insight into the associated fluctuation theorems. What is more, we shall find fully quantum generalizations and see how the 2nd law of thermodynamics can be recast as an equality.


\begin{acknowledgments} 
We thank Fernando Brand\~{a}o, Micha\l~Horodecki and Lluis Masanes for interesting discussions, and especially Antonio Ac\'{i}n whose initial question sparked the present research. We also thank Felix Leditzky for comments on an earlier version. JO is supported by a Royal Society Wolfson Merit Award and by an EPSRC established career Fellowship.
\end{acknowledgments}

\bibliographystyle{apsrev4-1}
\bibliography{References,common/refthermo,common/work,common/refjono2,common/rmp12,common/refgrav2}

\clearpage
\widetext
\appendix
\section{Entanglement cost of transformations under LOCC} \label{ap:LOCC}

The monotones that we have used for studying Noisy Operations, have been, or can be, defined solely in terms of Lorenz curves. They are also monotones in the resource theory of bipartite pure state entanglement manipulation under Local Operations and Classical Communication \cite{nielsen1999conditions, vidal2000entanglement}, where such curves can also be constructed. Using our monotones, and the behavior of Lorenz curves under tensor product with certain states, we give an expression for the single-shot \emph{entanglement of transition}. This is the amount of entanglement that must be added (or can be extracted) in transforming $\ket{\Psi_{AB}}$ into $\ket{\Phi_{AB}}$ under LOCC.

Previous work has considered the distillable entanglement and entanglement cost - the entanglement of transition when one of $\ket{\Phi_{AB}}$ or $\ket{\Psi_{AB}}$, respectively, is taken to be a separable state. In \cite{buscemi2010distilling}, the amount of entanglement that can be distilled from a single copy of a bipartite mixed state, $\sigma_{AB}$, was bounded in terms of the coherent information. For a bipartite pure state, $\ket{\Psi_{AB}}$, it is given precisely by the min-entropy of the reduced state $\tr_B\ketbra{\Psi_{AB}}{\Psi_{AB}}$ \cite{buscemi2010general}. The amount of entanglement required to create a single copy of $\sigma_{AB}$ was calculated in \cite{buscemi2011entanglement} in terms of the conditional zero-R\'{e}nyi entropy. In each paper, the analysis extends to accomplishing the task up to fixed error, $\epsilon$. Here we go beyond the distillation and cost, showing that the more general entanglement of transition between two arbitrary pure bipartite states, can be quantified in terms of the monotones $L_y$.

For a bipartite pure state, $\ket{\Psi}$, on a system $AB$, let:
\begin{equation}
\rho_{\ket{\Psi}}=\tr_B\ketbra{\Psi}{\Psi}.
\end{equation} 
Without access to any additional resources, it is possible for two separated parties to transform $\ket{\Psi}$ into another bipartite state, $\ket{\Phi}$, under LOCC if and only if $\rho_{\ket{\Phi}}$ majorizes $\rho_{\ket{\Psi}}$ \cite{nielsen1999conditions}. Hence if $\ket{\Psi}$ can be transformed into $\ket{\Phi}$:
\begin{equation}
V_l \left(\rho_{\ket{\Phi}}\right) \geq V_l\left(\rho_{\ket{\Psi}}\right), \quad \forall l,
\end{equation}
and:
\begin{equation}
L_y\left(\rho_{\ket{\Phi}}\right) \leq L_y\left(\rho_{\ket{\Psi}}\right), \quad \forall y\in\mathcal{D}\left(\rho_{\ket{\Psi}}\right),
\end{equation}
where the functions $V_l$, $L_y$ and the set $\mathcal{D}$ are defined as per Section \ref{sec:NO}. Note that for LOCC we consider the `elbows' of the Lorenz curve associated with the initial state whilst for NO we consider the `elbows' of the final state's curve when determining if a transition is possible. This change occurs as for a transition to take place in pure state entanglement theory, we require that the final state majorizes the initial state whilst in the theory of NO, we require that the initial state majorizes the final.

The unit for quantifying entanglement costs is the ebit - the maximally entangled state with local dimension 2. The maximally entangled state with local dimension $d$:
\begin{equation}
\ket{e_d}=\frac{1}{\sqrt{d}}\sum_{i=0}^{d-1}\ket{i}_A\ket{i}_B,
\end{equation}
requires the two parties to share $\log d$ ebits to prepare it and they can extract $\log d$ shared ebits if they share one. Separable states are free within this resource theory so if we define:
\begin{equation}
\ket{\textrm{sep}_d}=\ket{0}_A\ket{0}_B,
\end{equation}
as a separable pure state with local dimension $d$, $\ket{\textrm{sep}_d}$ costs 0 ebits to prepare and no shared entanglement can be extracted from it. Note that:
\begin{align}
L_y\left(\rho_{\ket{\Psi}\otimes\ket{e_d}}\right)&=L_y\left(\rho_{\ket{\Psi}}\right),\\
L_y\left(\rho_{\ket{\Psi}\otimes\ket{\textrm{sep}_d}}\right)&=\frac{1}{d}L_y\left(\rho_{\ket{\Psi}}\right).
\end{align}

The entanglement of transition, $E_{\ket{\Psi}\rightarrow\ket{\Phi}}$, is the optimal amount of shared, bipartite entanglement that the parties need to add, or can gain, to transform a copy of $\ket{\Psi}$ into $\ket{\Phi}$ under LOCC. If the quantity is negative, entanglement must be used up to make the transition possible while if it is positive, entanglement can be extracted. $E_{\ket{\Psi}\rightarrow\ket{\Phi}}$ is the maximum value of $v\log d_2 - u\log d_1$ that can be achieved where $u,v,d_1,d_2\in\mathbb{Z}$ are such that:
\begin{equation}
\ket{\Psi}\ket{e_{d_1}}^{\otimes u} \ket{\textrm{sep}_{d_2}}^{\otimes v} \stackrel{\textit{LOCC}}{\rightarrow} \ket{\Phi}\ket{e_{d_2}}^{\otimes v} \ket{\textrm{sep}_{d_1}}^{\otimes u}.
\end{equation}
In terms of Lorenz curves, the addition of entangled and separable state serve to rescale (with respect to the $x$-axis) the curves associated with $\ket{\Psi}$ and $\ket{\Phi}$ by ${d_2}^{-v}$ and ${d_1}^{-u}$ respectively. To maximize $E_{\ket{\Psi}\rightarrow\ket{\Phi}}$, the Lorenz curve of the rescaled $\ket{\Psi}$ needs to lie just to the right of the Lorenz curve of the rescaled $\ket{\Phi}$. Hence:
\begin{align}
\frac{1}{{d_2}^v} L_{y}\left(\rho_{\ket{\Psi}}\right) \geq \frac{1}{{d_1}^u} L_{y}\left(\rho_{\ket{\Phi}}\right), \quad \forall y\in\mathcal{D}\left(\rho_{\ket{\Psi}}\right),
\end{align}
with equality for some $y$. This gives:
\begin{align}
2^{-\left(E_{\ket{\Psi}\rightarrow\ket{\Phi}}\right)}=\frac{{d_1}^u}{{d_2}^v}=\max_{y\in\mathcal{D}\left(\rho_{\ket{\Psi}}\right)}\frac{L_{y}\left(\rho_{\ket{\Phi}}\right)}{L_{y}\left(\rho_{\ket{\Psi}}\right)},
\end{align}
in analogy with Lemma \ref{NO Work Exp} for the work of transition in Noisy Operations.

This can be generalized to consider situations where we require only that the final state is $\epsilon$-close to the target state $\Phi$ with respect to a measure such as the squared fidelity, $F^2\left(\ket{\Phi'},\ket{\Phi}\right)=\left|\braket{\Phi'}{\Phi}\right|^2$. Let:
\begin{equation}
b^{\epsilon}\left(\ket{\Phi}\right)=\left\{\ket{\Phi'}:\left|\braket{\Phi'}{\Phi}\right|^2\geq1-\epsilon\right\}.
\end{equation}
Then, defining $E^{\epsilon}_{\ket{\Psi}\rightarrow\ket{\Phi}}$ by:
\begin{equation}
E^{\epsilon}_{\ket{\Psi}\rightarrow\ket{\Phi}}=\max_{\ket{\Phi'}\in b^{\epsilon}\left(\ket{\Phi}\right)} E_{\ket{\Psi}\rightarrow\ket{\Phi'}},
\end{equation}
we can write:
\begin{equation}
E^{\epsilon}_{\ket{\Psi}\rightarrow\ket{\Phi}}=\max_{\ket{\Phi'}\in b^{\epsilon}\left(\ket{\Phi}\right)}\left\{-\log\left[\max_{y\in\mathcal{D}\left(\rho_{\ket{\Psi}}\right)}\frac{L_{y}\left(\rho_{\ket{\Phi'}}\right)}{L_{y}\left(\rho_{\ket{\Psi}}\right)}\right]\right\}.
\end{equation}

\newpage

\section{Heralded probability} \label{ap:herald}

In this work we have considered the optimization of $p$ in the process:
\begin{equation} \label{eq:p mix}
\rho\stackrel{TO}{\longrightarrow}\rho'= p\sigma+\left(1-p\right)X,
\end{equation}
for given $\rho$ and $\sigma$. Another related notion of a probabilistic transformation is that of \emph{heralded probability}, i.e. a conclusive fluctuation to a state. In this setup, a qubit flag system with trivial Hamiltonian $H_F\propto\mathbb{I}$ is incorporated which starts in the state $\ket{0}$ and after the Thermal Operation, indicates whether the system was successfully transformed into $\sigma$. More concretely, with respect to heralded probability and for given $\rho$ and $\sigma$ one would attempt to maximize $p$ in the process:
\begin{equation} \label{eq:p flag}
\rho\otimes\ketbra{0}{0}\stackrel{TO}{\longrightarrow}\hat{\rho}=p\sigma\otimes\ketbra{0}{0}+\left(1-p\right)X\otimes\ketbra{1}{1},
\end{equation}
where the total Hamiltonian is $H=H_S+H_F$. A measurement on the flag will result in the system being in state $\sigma$ with probability $p$ and state $X$ with probability $1-p$.

When $\sigma$ is block-diagonal in the energy eigenbasis, the measurement strategy given in Section \ref{sec:measurement} can be used to convert a protocol obtaining a value of $p$ in Eq.~\eqref{eq:p mix} into one that obtains a value of $p$ in Eq.~\eqref{eq:p flag}. Indeed, since our initial manuscript, it has been shown that the maximum value of $p$ that can be achieved in both scenarios for such $\sigma$ is identical\cite{renes2015relative}.

However, analyzing the optimization of $p$ in Eq.~\eqref{eq:p flag} is more tractable than the equivalent problem with respect to Eq.~\eqref{eq:p mix} as for the problem of heralded probability we may always take $X=\tau_S$, the thermal state of the system. To see this, assume that we start with the state-Hamiltonian pair:
\begin{equation}
\left(p\sigma\otimes\ketbra{0}{0}+\left(1-p\right)X\otimes\ketbra{1}{1},H_S+H_F\right),
\end{equation}
and then apply the following Thermal Operations:
\begin{enumerate}
\item Append a thermal state with Hamiltonian $H_B=H_S$:
\begin{align*}
&\left(p\sigma\otimes\ketbra{0}{0}+\left(1-p\right)X\otimes\ketbra{1}{1},H_S+H_F\right)\\
\stackrel{TO}{\longrightarrow}&\left(p\sigma\otimes\tau_B\otimes\ketbra{0}{0}+\left(1-p\right)X\otimes\tau_B\otimes\ketbra{1}{1},H_S+H_B+H_F\right).
\end{align*}
\item Apply the unitary $U=\mathbb{I}_{SB}\otimes\ketbra{0}{0}+U^{\textit{swap}}_{SB}\otimes\ketbra{1}{1}$ where $U^{\textit{swap}}_{SB}$ is the unitary that swaps the state of the system with the state of the bath. As $H_S=H_B$, $\left[U,H_S+H_B+H_F\right]=0$ and hence $U$ is a valid Thermal Operation. This implements:
\begin{align*}
&\left(p\sigma\otimes\tau_B\otimes\ketbra{0}{0}+\left(1-p\right)X\otimes\tau_B\otimes\ketbra{1}{1},H_S+H_B+H_F\right)\\
\stackrel{TO}{\longrightarrow}&\left(p\sigma\otimes\tau_B\otimes\ketbra{0}{0}+\left(1-p\right)\tau_S\otimes X\otimes\ketbra{1}{1},H_S+H_B+H_F\right).
\end{align*}
\item Discard the bath system:
\begin{align*}
&\left(p\sigma\otimes\tau_B\otimes\ketbra{0}{0}+\left(1-p\right)\tau_S\otimes X\otimes\ketbra{1}{1},H_S+H_B+H_F\right)\\
\stackrel{TO}{\longrightarrow}&\left(p\sigma\otimes\ketbra{0}{0}+\left(1-p\right)\tau_S\otimes\ketbra{1}{1},H_S+H_F\right).
\end{align*}
\end{enumerate}
Hence, given a state of the form $\hat{\rho}$, we can always find a Thermal Operation that converts $X$ into $\tau_S$. In attempting to maximize $p$ in Eq.~\eqref{eq:p flag} we can thus always assume that $X$ is the thermal state of the system. This simplification will enable us to prove additional bounds on the maximum value of the heralded probability, $\hat{p}$, for Catalytic Thermal Operations and the case where $\sigma$ contains coherences in energy. 

\subsection{Heralded probability with catalysts}

In Catalytic Thermal Operations, given $\rho$ and $\sigma$, we are interested in whether there exists a state $\omega$ such that:
\begin{equation}
\rho\otimes\omega\stackrel{TO}{\longrightarrow}\sigma\otimes\omega.
\end{equation}
If such an $\omega$ exists, we say it catalyzes the transformation and write $\rho\stackrel{CTO}{\longrightarrow}\sigma$. Determining whether such an $\omega$ exists has resulted in a family of second laws of thermodynamics \cite{brandao2013second}.

Defining the generalized free energies of $\left(\rho,H_S\right)$ by:
\begin{equation}
F_{\alpha}\left(\rho||\tau_S\right)=kT D_\alpha\left(\rho||\tau_S\right)-kT\log Z_S,
\end{equation}
where $D_\alpha$ are the R\'{e}nyi divergences given by:
\begin{equation}
D_\alpha\left(\rho||\tau_S\right)=\frac{\textrm{sgn}\left(\alpha\right)}{\alpha-1}\log\tr\left[\rho^\alpha\tau_S^{1-\alpha}\right],
\end{equation}
then for block-diagonal $\sigma$, $\rho\stackrel{CTO}{\longrightarrow}\sigma$ if and only if $F_{\alpha}\left(\rho_D||\tau_S\right)\geq F_{\alpha}\left(\sigma||\tau_S\right)$, holds $\forall\alpha\geq0$. If $\sigma$ is not block-diagonal, then by replacing $\sigma$ with $\sigma_D$ in these expressions we obtain conditions that are necessary but not sufficient.

To optimize the heralded probability of a transformation from $\rho$ to $\sigma$ under Catalytic Thermal Operations, we thus want to maximize the value of $p$ in $\hat{\rho}=p\sigma\otimes\ketbra{0}{0}+\left(1-p\right)\tau_S\otimes\ketbra{1}{1}$ subject to these free energy constraints applied to $\rho$ and $\hat{\rho}$. This gives us:
\begin{equation}
\hat{p}\leq\max\left\{p: F_\alpha\left(\rho_D\otimes\ketbra{0}{0}\Big|\Big|\tau_S\otimes\mathbb{I}_2\right)\geq F_\alpha\left(p\sigma_D\otimes\ketbra{0}{0}+\left(1-p\right)\tau_S\otimes\ketbra{1}{1}\Big|\Big|\tau_S\otimes\mathbb{I}_2\right),\quad \alpha\in\left[0,\infty\right] \right\}.
\end{equation}
Furthermore, when $\sigma$ is block diagonal in the energy eigenbasis, this bound on $\hat{p}$ is achievable as the second laws \cite{brandao2013second} imply there exists an $\omega$ such that:
\begin{equation}
\rho\otimes\ketbra{0}{0}\otimes\omega\stackrel{TO}{\longrightarrow}\left(\hat{p}\sigma\otimes\ketbra{0}{0}+\left(1-\hat{p}\right)\tau_S\otimes\ketbra{1}{1}\right)\otimes\omega.
\end{equation}
 
\subsection{Heralded probability for arbitrary quantum states}

Quantum generalizations of the R\'{e}nyi divergences have also been used to construct constraints on coherence manipulation under Thermal Operations. Specifically, if we define the \emph{free coherence} of a state $\rho$ by:
\begin{equation}
A_\alpha\left(\rho\right)=S_\alpha\left(\rho||\rho_D\right),
\end{equation}
where $S_\alpha$ are the quantum R\'{e}nyi divergences given by:
\begin{equation}
S_\alpha\left(\rho||\rho_D\right)=\left\{\begin{array}{ll}
\frac{1}{\alpha-1}\log\tr\left[\rho^\alpha\rho_D^{1-\alpha}\right],& \alpha\in\left[0,1\right),\\
\tr\left[\rho\left(\log\rho-\log\rho_D\right)\right] , & \alpha=1,\\
\frac{1}{\alpha-1}\log\tr\left[\left(\rho_D^{\frac{1-\alpha}{2\alpha}}\rho\rho_D^{\frac{1-\alpha}{2\alpha}}\right)^\alpha\right],& \alpha>1,
\end{array}\right.
\end{equation}

then it was shown in \cite{lostaglio2015description} that for general $\sigma$, $\rho\stackrel{TO}{\longrightarrow}\sigma$ only if $A_\alpha\left(\rho\right)\geq A_\alpha\left(\sigma\right)$ for all $\alpha\geq 0$.

Using this we obtain the following bound on the maximum heralded probability of a transformation from $\rho$ to $\sigma$ under Thermal Operations:  
\begin{equation}
\hat{p}\leq\max\left\{p: A_\alpha\left(\rho\otimes\ketbra{0}{0}\right)\geq A_\alpha\left(p\sigma\otimes\ketbra{0}{0}+\left(1-p\right)\tau_S\otimes\ketbra{1}{1}\right),\quad \alpha\in\left[0,\infty\right]\right\}.
\end{equation}

\newpage

\section{The tradeoff between probability and work of transition for a qubit under Noisy Operations} \label{ap:tradeoff}

In this appendix we consider how $p^*$ varies if we supply additional work when attempting to convert $\rho$ into $\sigma$. Alternatively we could attempt to extract extra work during the process. Whilst characterizing the behavior of $p^*$ in general is an open question, here we give the solution for qubit systems with trivial Hamiltonian.

Consider two qubits: $\rho$ with ordered eigenvalues $\vec{\eta}=\{\eta_1,\eta_2\}$ and $\sigma$ with ordered eigenvalues $\vec{\zeta}=\{\zeta_1,\zeta_2\}$. For the transition:
\begin{align}
\begin{array}{rclc}
\rho\otimes s_{|W|}&\stackrel{NO}{\longrightarrow}&\rho'=p\sigma+\left(1-p\right)X,&\quad\textrm{if }W \le 0,\\
\rho&\stackrel{NO}{\longrightarrow}&\rho'=p\sigma\otimes s_{|W|}+\left(1-p\right)X,&\quad\textrm{if }W>0,
\end{array}
\end{align}
how does $p^*$ behave as a function of $W$? Note that using Theorem \ref{NO Theorem}, $p^*\left(0\right)$ is given by $\min\left\{\frac{\eta_1}{\zeta_1},1\right\}$. For $W\leq W_{\rho\rightarrow\sigma}$, by definition we have that $p^*\left(W\right)=1$ (as for these values of $W$, the transition can be performed deterministically). So as to investigate the behavior of the function at $W=0$, in what follows we shall assume $\eta_1<\zeta_1$ and hence $W_{\rho\rightarrow\sigma}<0$.

First take $W\leq 0$ and for simplicity, assume it can be written as $W=-\log\frac{d}{j}$. Then:
\begin{align}
\rho\otimes s_{|W|}&=\textrm{diag}\biggl(\underbrace{\frac{\eta_1}{j},\hdots,\frac{\eta_1}{j}}_{j},\underbrace{\frac{\eta_2}{j},\hdots,\frac{\eta_2}{j}}_{j},\underbrace{0,\hdots,0\vphantom{\frac{\eta_1}{j}}}_{2\left(d-j\right)}\biggr),\\
\sigma\otimes\frac{\mathbb{I}}{d} &=\textrm{diag}\biggl(\underbrace{\frac{\zeta_1}{d},\hdots,\frac{\zeta_1}{d}}_{d},\underbrace{\frac{\zeta_2}{d},\hdots,\frac{\zeta_2}{d}}_{d}\biggr).
\end{align}
We now use Theorem \ref{NO Theorem} together with the fact that $p^*\left(W\right)$ will occur at an `elbow' of $\sigma$ (which is equivalent to $\sigma\otimes\frac{\mathbb{I}}{d}$ under Noisy Operations). As $W_{\rho\rightarrow\sigma}<W$, and the transition does not happen with certainty, we need to only consider the elbow $l=d$ in Theorem \ref{NO Theorem}. Thus:
\begin{align}
p^*\left(W\right)=\frac{V_d\left(\rho\otimes s_{|W|}\right)}{V_d\left(\sigma\otimes\frac{\mathbb{I}}{d}\right)} =\frac{\eta_1+\frac{d-j}{j}\eta_2}{\zeta_1}, \quad W_{\rho\rightarrow\sigma}<-\log\frac{d}{j}\leq0.
\end{align}
This can be rearranged to give:
\begin{equation}
p^*\left(W\right)=\left(2-2^{-W}\right)p^*\left(0\right)+\frac{2^{-W}-1}{\zeta_1}, \quad\quad W_{\rho\rightarrow\sigma}< W\leq0.
\end{equation}

Now take $W\geq 0$ and assume it can be written as $W=\log\frac{d}{j}$. Then:
\begin{align}
\rho\otimes\frac{\mathbb{I}}{d} &=\textrm{diag}\biggl(\underbrace{\frac{\eta_1}{d},\hdots,\frac{\eta_1}{d}}_{d},\underbrace{\frac{\eta_2}{d},\hdots,\frac{\eta_2}{d}}_{d}\biggr),\\
\sigma\otimes s_{|W|}&=\textrm{diag}\biggl(\underbrace{\frac{\zeta_1}{j},\hdots,\frac{\zeta_1}{j}}_{j},\underbrace{\frac{\zeta_2}{j},\hdots,\frac{\zeta_2}{j}}_{j},\underbrace{0,\hdots,0\vphantom{\frac{\eta_1}{j}}}_{2\left(d-j\right)}\biggr).
\end{align}
There are two `elbows' on $\sigma\otimes s_{|W|}$, at $l=j$ and $l=2j$. Calculating the ratio of the monotones at these points gives:
\begin{align}
\frac{V_j\left(\rho\otimes\frac{\mathbb{I}}{d}\right)}{V_j\left(\sigma\otimes s_{|W|}\right)}&=\frac{j\frac{\eta_1}{d}}{\zeta_1}=\frac{\eta_1}{\zeta_1} 2^{-W},\label{eq:l=j}\\
\frac{V_{2j}\left(\rho\otimes\frac{\mathbb{I}}{d}\right)}{V_{2j}\left(\sigma\otimes s_{|W|}\right)}&=\left\{
\begin{array}{rlr}
2j\frac{\eta_1}{d}&=2\eta_1 2^{-W} &\text{if }  2j\leq d,\\
\eta_1+\frac{2j-d}{d}\eta_2&=\left(2\eta_1-1\right)+2\left(1-\eta_1\right) 2^{-W} &\text{if } 2j\geq d.
\end{array}\right. \label{eq:l=2j}
\end{align}
It is easy to see that $\frac{\eta_1}{\zeta_1}\leq 2\eta_1$ since $\zeta_1\geq\frac{1}{2}$. Comparing Eq. (\ref{eq:l=j}) with the second case in Eq. (\ref{eq:l=2j}), it is possible to show that:
\begin{equation}
\frac{V_j\left(\rho\otimes\frac{\mathbb{I}}{d}\right)}{V_j\left(\sigma\otimes s_{|W|}\right)} \leq \frac{V_{2j}\left(\rho\otimes\frac{\mathbb{I}}{d}\right)}{V_{2j}\left(\sigma\otimes s_{|W|}\right)} \Leftrightarrow 2^W\geq \frac{\eta_1-2\zeta_1+2\eta_1\zeta_1}{2\eta_1\zeta_1-\zeta_1}.
\end{equation}
As $W\geq0$, the minimum ratio occurs at $l=j$. Hence:
\begin{equation}
p^{*}\left(W\right)=p^{*}\left(0\right)2^{-W},\quad W\geq0.
\end{equation}

Combining these results, we have that for $\eta_1<\zeta_1$:
\begin{equation}
p^{*}\left(W\right)=
\begin{cases}
1 & \text{if } W\leq W_{\rho\rightarrow\sigma},\\
\left(2-2^{-W}\right)p^*\left(0\right)+\frac{2^{-W}-1}{\zeta_1} & \text{if } W_{\rho\rightarrow\sigma}<W\leq 0,\\
p^{*}\left(0\right)2^{-W} &\text{if } 0<W.
\end{cases}
\end{equation}
As an example, in Figure \ref{fig:tradeoff}, we plot $p^*\left(W\right)$ against $W$ for $\vec{\eta}=\{0.6,0.4\}$ and $\vec{\zeta}=\{0.85,0.15\}$.

For completeness, for $\eta_1\geq\zeta_1$:
\begin{equation}
p^{*}\left(W\right)=
\begin{cases}
1 & \text{if } W\leq W_{\rho\rightarrow\sigma},\\
\left(2\eta_1-1\right)+2\left(1-\eta_1\right) 2^{-W} & \text{if } W_{\rho\rightarrow\sigma}<W\leq \log\left(\frac{\eta_1-2\zeta_1+2\eta_1\zeta_1}{2\eta_1\zeta_1-\zeta_1}\right),\\
\frac{\eta_1}{\zeta_1}2^{-W} & \text{if } W> \log\left(\frac{\eta_1-2\zeta_1+2\eta_1\zeta_1}{2\eta_1\zeta_1-\zeta_1}\right).
\end{cases}
\end{equation}

\begin{figure}
\centering 
\includegraphics[width=1\columnwidth]{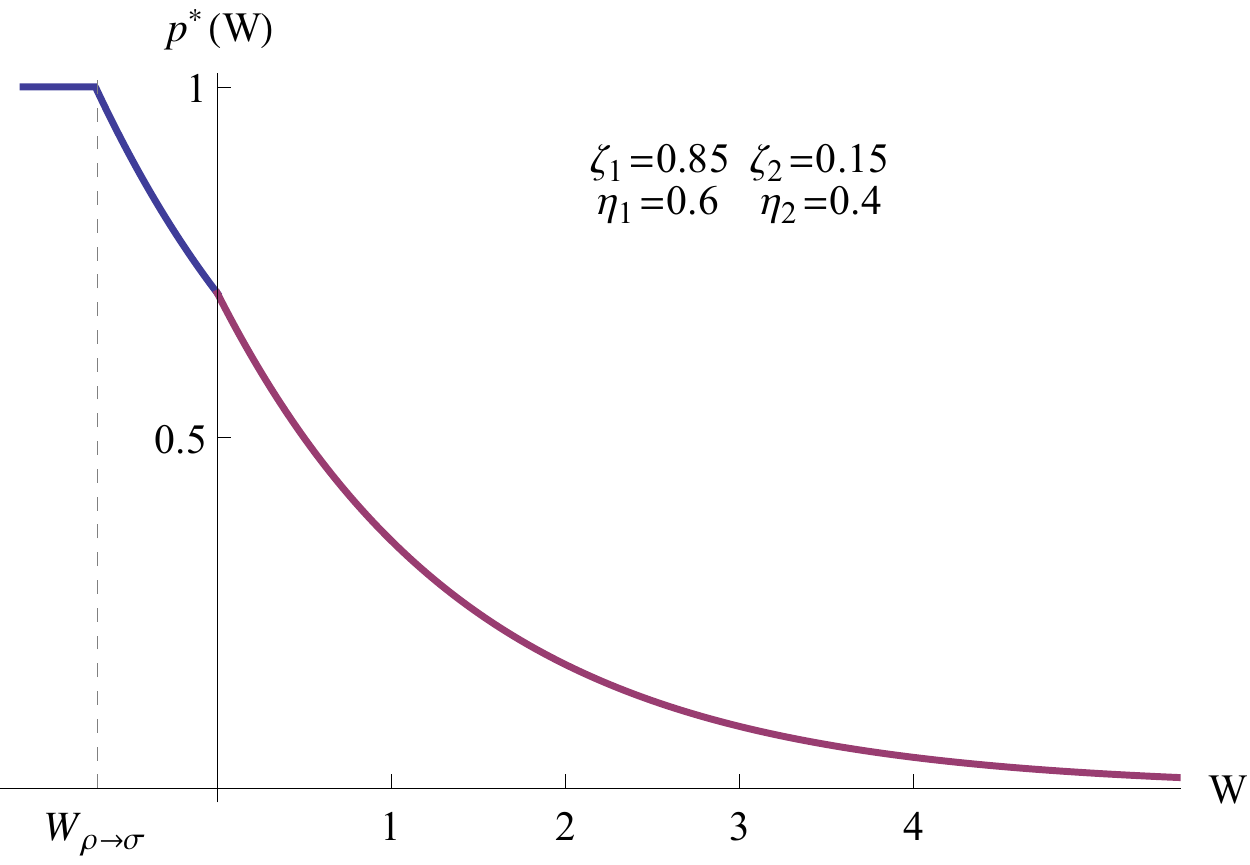} 
\caption{Here we show how $p^*(W)$ varies as a function of $W$ for qubits under Noisy Operations when $W_{\rho\rightarrow\sigma}<0$. Note the behavior at $W=0$, indicating the function is not convex in $W\geq W_{\rho\rightarrow\sigma}$.} \label{fig:tradeoff}
\end{figure}

\newpage

\end{document}